\DeclareMathAlphabet{\mathpzc}{OT1}{pzc}{m}{it}
\newtheorem{theorem}{Theorem}
\newtheorem{proposition}{Proposition}
\DeclareMathOperator{\Sgn}{Sgn}
\newcommand\myatop[2]{\genfrac{}{}{0pt}{}{#1\hfill}{#2\hfill}}
\newtheorem{corollary}{Corollary}[theorem]
\begin{document}

\title{On the composition of an arbitrary collection of $SU(2)$ spins:\\ An Enumerative Combinatoric Approach}

\author{J.A. Gyamfi\footnote{Email corresponding author: \href{mailto:jerryman.gyamfi@sns.it}{jerryman.gyamfi@sns.it} }}
\affiliation{Scuola Normale Superiore di Pisa, Piazza dei Cavalieri 7, 56126 Pisa, Italy.}

\author{V. Barone}
\affiliation{Scuola Normale Superiore di Pisa, Piazza dei Cavalieri 7, 56126 Pisa, Italy.}

\date{\today}

\begin{abstract}
The whole enterprise of spin compositions can be recast as simple enumerative combinatoric problems. We show here that enumerative combinatorics (EC)\citep{book:Stanley-2011} is a natural setting for spin composition, and easily leads to very general analytic formulae -- many of which hitherto not present in the literature. Based on it, we propose three general methods for computing spin multiplicities; namely, 1) the multi-restricted composition, 2) the generalized binomial and 3) the generating function methods. Symmetric and anti-symmetric compositions of $SU(2)$ spins are also discussed, using generating functions. Of particular importance is the observation that while the common Clebsch-Gordan decomposition (CGD) -- which considers the spins as distinguishable --  is related to integer compositions, the symmetric and anti-symmetric compositions (where one considers the spins as indistinguishable) are obtained considering integer partitions. The integers in question here are none other but the occupation numbers of the Holstein-Primakoff bosons.
\par The pervasiveness of $q-$analogues in our approach is a testament to the fundamental role they play in spin compositions. In the appendix, some new results in the power series representation of Gaussian polynomials (or $q-$binomial coefficients) -- relevant to symmetric and antisymmetric compositions -- are presented. 

\end{abstract}

\pacs{}

\maketitle 

\section{Introduction}
In many areas of physics and chemistry, one is often faced with the daunting task of determining the Clebsch-Gordan decomposition (CGD) series for a given collection of angular momenta. Seemingly unrelated problems like determining how many linearly independent isotropic tensor isomers (particularly useful in determining rotational averages \cite{art:Andrews-1977} of observables, for example) of rank $n$ there are in a $D-$dimensional space, certain variants of the random walk problem \cite{art:Polychronakos-2016}, symmetric exclusion processes\cite{art:Mendonca-2013}, or even black-hole entropy calculations\citep{art:Livine-2005} can all be related to CGD --  not to mention its use in the detection and characterization of entangled states\citep{art:Cohen-2016}. The relevance of CGD is deeply rooted in group theory and manifest in many of its applications.

\par CGD in the case of two arbitrary $SU(2)$ spin representations is very elementary and reported in textbooks on the subject \cite{book:Zare-1988,book:Atkins-2011}. In the general case of $N$ angular momenta, it is possible to repeat the two angular momenta addition scheme over and over, as explained for example in \cite{book:Atkins-2011}. Needless to say, this approach soon becomes unmanageable as the total number of uncoupled angular momenta $N$ increases, and it gets even worse the higher the magnitude of the momenta involved. An analytical way of determining the Clebsch-Gordan decomposition series is therefore ineluctable. Zachos\cite{art:Zachos-1992} has provided an analytical expression in the case of $N$ spin-$(1/2)$s. Curtright, Van Kortryk and Zachos\cite{art:Curtright-2017} have recently considered the case of a $N$ identical spin$-j$ system. Polychronakos and Sfetsos\cite{art:Polychronakos-2016} have also considered the same problem and have shown how to generalize the results -- by means of the statistical mechanical partition function -- to the case of an arbitrary collection of spins. Though in \cite{art:Polychronakos-2016} an explicit expression for the composition of an arbitrary number of two kinds of spin is given, what is missing in the literature is a general analytic expression for the computation of the multiplicities of the distinct resulting momenta appearing in the CGD of an arbitrary collection of spins. And with this article we propose material to fill that gap. Inspired by the connection between the Hilbert space decomposition of simple symmetric exclusion processes (SSEP) and the CGD of $1/2-$spins\cite{art:Mendonca-2013}, we deduce an even more general relation between symmetric exclusion processes and CGD, based on which we derive analytic formulae for spin multiplicities in symmetric and antisymmetric spin compositions. The latter has been discussed in \cite{art:Polychronakos-2016} based on statistical mechanics, namely the grand partition function, without nonetheless the derivation of explicit analytic formulae for the spin multiplicities. Although the statistical mechanical approach is physically intuitive, we remark that one must be cautious in drawing up conclusions or conjectures on the extensive use of the method as it is only viable when the spins are identical (\S \ref{subsec:exclusion_processes}). But such a limitation is easily surmounted with the approach described below. 
\par The basis of our approach is enumerative combinatorics (EC)\citep{book:Stanley-2011}, and it begins with a bijective mapping of the eigenvalues of the $j_z$ component (henceforth, referred to as "$z-$eigenvalue") of each spin to a subset of $\mathbb{N}_0$ (the set of all natural numbers, zero included). The elements of the new basis may be interpreted as the eigenvectors of the number operator in the Holstein-Primakoff transformation\cite{art:Holstein-1940}. The advantage of the EC approach we discuss below is that it enables one to formulate the CGD problem from the most general perspective right from the start: thus, making it possible to formulate very general solutions from which one can derive solutions to limit cases like the CGD of $N$ spin-$j$ (\S \ref{sec:univariate_spin_sys}) -- which is essentially the foremost general case for which one can find in the literature explicit analytic formulae for the spin multiplicities\citep{art:Polychronakos-2016,art:Mendonca-2013,art:Curtright-2017}. Inspired by enumerative combinatoric analysis, we propose here three methods for the CGD of an arbitrary collection of spins; namely, 1) the multi-restricted composition, 2) the generating function, and 3) the generalized binomial methods. All three are related to each other: from the first, we derive the second method -- from which we derive in turn the third. The generating function method has already been discussed in the literature based on quantum statistical mechanics\citep{art:Polychronakos-2016}, however it ought to be expected since the spin composition problem is purely an enumerative combinatoric one, and generating functions have been the cornerstone of EC and number theory since first introduced by Euler\cite{book:Dickson-2005}. The generating functions given here and their corresponding counterparts in \citep{art:Polychronakos-2016} (see also \citep{art:Mendonca-2013}) are related to each other by a proportionality constant which depends on the total spin of the system.
\par Drawing upon the three methods, we derive general expressions for the dimension of the total $z-$eigenvalue invariant subspaces of the system in Sec. \ref{sec:momentum_inv_sub_dim}. This will be followed by derivation of general expressions for the spin multiplicities in Sec. \ref{sec:multiplicities} -- again, based on the three methods. Then, in Sec. \ref{sec:univariate_spin_sys} we discuss the limit case of "univariate" spin systems. Finally, in Sec. \ref{sec:applications} we illustrate some applications of the general formulae derived in the previous sections as to some problems in quantum physics, number theory and statistics. Especially, as a way of illustrating the usefulness of the results in the previous sections, we determine the CGD series for a specific collection of spins in \S \ref{subsec:application_illustration}. Connections between CGD and lattice paths and some counting numbers like the Catalan and Riordan's will be briefly discussed in \S \ref{subsec:counting_numbers}. Then in \S \ref{subsec:exclusion_processes}, we turn our attention to the application of CGD in analyzing simple symmetric exclusion processes, which will lead us to consider the spin-$\infty$ limit and the CGD series for a collection of such spins. Insights gained in \S \ref{subsec:exclusion_processes} then make it possible to draft down a dictionary which allows one to translate simple exclusion process problems into the language of $SU(2)$ representations, and vice versa. This will be discussed in \S \ref{subsec:symmteric_antisymmetric}, where we are also led to formulate the symmetric and antisymmetric spin composition problem from the enumerative combinatoric perspective. We end Sec. \ref{sec:applications} with some applications of our results in number theory, \emph{viz.} multi-restricted compositions, and statistics in \S \ref{subsec:number_thoery} and \S \ref{subsec:dice}. But before all these, we shall look into one particular important problem which has not yet been addressed in the literature: this has to do with the minimum of the $SU(2)$ spin representations resulting from the composition of an initial collection of spins. While the maximum is easy to prove, the minimum is not. We prove this minimum in the general case in Sec. \ref{sec:min_max_J}. 
\par It is remarkable that the spin multiplicities in the CGD are differences between integer compositions while those in the symmetric and antisymmetric compositions are related to differences between integer partitions. As we know, integer compositions are ordered partitions and, thus, the fact that one has to deal with them in CGDs comes as no surprise since the latter is a spin composition scheme whereby identical spins are treated as distinguishable from each other. On the contrary, when it comes to symmetric and antisymmetric compositions -- which result from enforcing the quantum mechanical requirement of indistinguishability of identical spins -- integer partitions become the norm. This suggests a close-knit between spin compositions and the theory of partitions\cite{book:Andrews-1976}, which has made enormous advancements in the last quarter of a millennium. An important tool which has played a significant role in the progress of EC, number theory (and of course, the theory of partitions) are the so-called $q-$analogues. And we make extensive use of them in this paper. Of particular interests are the $q-$binomial coefficients (or Gaussian polynomials) which, as we will see, are essential in the symmetric and antisymmetric spin compositions (\S \ref{subsec:symmteric_antisymmetric}). As an effort to give analytic expressions for the spin multiplicities in the symmetric and antisymmetric compositions, we treat the power series representation of the Gaussian polynomials in the Appendix, where we give some new results. Indeed, we show that the coefficients in the power series (which are known to be restricted partitions) can be expressed as sums of convoluted sums involving the product of two (modified) Heaviside step-functions which we define in the paper (Eq. \eqref{eq:Heaviside}). We also give some recurrence identities in relation to restricted partitions in the appendix. These results may constitute an efficient algorithm to compute restricted partitions.
\par The formulae we give in the paper can be easily implemented computationally, thus they could greatly enhance recent theoretical and computational efforts aimed at limiting the computational cost of running simulations on many-body systems involving a considerable number of angular momenta. In a forthcoming article, for example, we shall have the occasion to expound more on the use of these analytical formulae to better characterize the Hilbert space of static multi-spin Hamiltonians and show how they can be employed to run simulations of EPR and NMR spectra which scale polynomially with the number of spins rather than exponentially. 
\par 

\section{Coupling of an arbitrary collection of angular momenta}\label{sec:min_max_J}
Say we have a finite collection of $SU(2)$ spin representations: $\mathpzc{A} := \{j_1,j_2, \ldots , j_N\}$, and $j_i \in \{\frac{1}{2},1,\frac{3}{2},2, \ldots \}$. The coupling of these angular momenta will give rise to a multiset\cite{art:Singh-2007,book:Stanley-2011} $\mathpzc{E}_{\mathpzc{A}}$ of angular momenta which we indicate as 
				\begin{equation}
				\label{eq:multiset_E}
				\mathpzc{E}_{\mathpzc{A}} := \{\underbrace{J_0, \ldots ,J_0}_{\lambda_0}, \underbrace{J_1 , \ldots , J_1}_{\lambda_1}, \ldots , \underbrace{J_{m}, \ldots , J_{m} }_{\lambda_{m}}\} := \left\lbrace J^{\lambda_0}_0, J^{\lambda_1}_1, \ldots , J^{\lambda_m}_m\right\rbrace
				\end{equation}
where $\lambda_\kappa$ is the multiplicity of $J_\kappa$. 
\par In group theoretical terms, we may argue as follows: The angular momentum $j_i$ is associated with the Hilbert space $\mathcal{H}^{(j_i)}$ of dimension $2j_i + 1$. The Hilbert space $\mathcal{H}$ of the collection is given by the tensor product
				\begin{equation}
				\mathcal{H} = \mathcal{H}^{(j_1)} \otimes \mathcal{H}^{(j_2)} \otimes \cdots \otimes \mathcal{H}^{(j_N)} \ .
				\end{equation}	
This representation of the Hilbert space $\mathcal{H}$ is reducible into distinct irreducible components $\mathcal{H}^{(J_\kappa)}$, of dimension $2 J_\kappa +1$. $\mathcal{H}$ in terms of its irreducible components is therefore
				\begin{equation}
				\mathcal{H} = \underbrace{\mathcal{H}^{(J_0)} \oplus \ldots \oplus \mathcal{H}^{(J_0)}}_{\lambda_0}\oplus \underbrace{\mathcal{H}^{(J_1)} \oplus \ldots \oplus \mathcal{H}^{(J_1)}}_{\lambda_1} \oplus \cdots \oplus \underbrace{\mathcal{H}^{(J_{m})} \oplus \ldots \oplus \mathcal{H}^{(J_{m})}}_{\lambda_{m}} \ .
				\end{equation}							 
We now determine the distinct elements of $\mathpzc{E}_{\mathpzc{A}}$. To begin with, we shall assume henceforth the following ordering of the distinct elements of $\mathpzc{E}_{\mathpzc{A}}$: $J_0 > J_1 > \ldots > J_{m}$. 

\begin{theorem}\label{thm:J_0_and_J_m}
Given the finite multiset $\mathpzc{A}=\{j_1,j_2, \ldots , j_N\}$ of $SU(2)$ spin representations, then the maximum $(J_0)$ and minimum $(J_m)$ of the related coupled representation $\mathpzc{E}_{\mathpzc{A}}$ are given by the expressions:
				\begin{equation}
				\label{eq:J_0}
				J_0 = \max \mathpzc{E}_{\mathpzc{A}} = \sum_i j_i  
				\end{equation}
				\begin{equation}
				\label{eq:J_m}
				J_{m} = \min \mathpzc{E}_{\mathpzc{A}} = \upsilon_{\mathpzc{A}} \cdot H(\upsilon_{\mathpzc{A}}) + \left(1- H(\upsilon_{\mathpzc{A}})\right)\cdot \frac{(2J_0 \hspace{-0.3cm}\mod 2 )}{2}				
				\end{equation}
respectively, where
				\begin{equation}
				\label{eq:upislon_A}
				\upsilon_{\mathpzc{A}} := 2\cdot \max \mathpzc{A} - J_0 \ ,
				\end{equation}
and where $H\left( x \right)$ is the Heaviside step function, defined here to be
				\begin{equation}
				\label{eq:Heaviside}
				H\left( x \right) := \begin{cases}
				0 , \mbox{ if }x < 0\\
				1 , \mbox{ if }x \geq 0 \ .
				\end{cases}
				\end{equation}	
\end{theorem}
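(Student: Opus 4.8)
The plan is to treat the two claims separately and, throughout, to work with the reduction that couples the single largest spin against the coupled representation of all the others. Write $j_{\max} := \max\mathpzc{A}$, let $\mathpzc{A}' := \mathpzc{A}\setminus\{j_{\max}\}$ be the remaining spins with maximal coupled spin $J_0' = J_0 - j_{\max} = \sum_i j_i - j_{\max}$ and minimal coupled spin $J_m'$, and set $\varepsilon := (2J_0 \bmod 2)/2 \in \{0,\tfrac12\}$, so that $\varepsilon=0$ exactly when $J_0$ is an integer and $\varepsilon=\tfrac12$ when $J_0$ is a half-integer. The first thing I would record is that \eqref{eq:J_m} is merely a compact rewriting of $J_m = \max(\upsilon_{\mathpzc{A}},\varepsilon)$: since $2\upsilon_{\mathpzc{A}} = 4j_{\max}-2J_0 \equiv 2J_0 \pmod 2$, the number $\upsilon_{\mathpzc{A}}$ always carries the same integer/half-integer character as $J_0$, so whenever $\upsilon_{\mathpzc{A}}\geq 0$ one has $\upsilon_{\mathpzc{A}}\geq\varepsilon$ and the Heaviside combination selects $\upsilon_{\mathpzc{A}}$, whereas whenever $\upsilon_{\mathpzc{A}}<0$ it selects $\varepsilon$. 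Hence it suffices to prove $J_0=\sum_i j_i$ and $J_m=\max(\upsilon_{\mathpzc{A}},\varepsilon)$.

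For the maximum I would invoke the elementary fact that the product state $\bigotimes_i \ket{j_i,m_i=j_i}$ is the unique vector in $\mathcal{H}$ with total $z$-eigenvalue $\sum_i j_i$. Being non-degenerate and of maximal weight it must be a highest-weight vector, so the irreducible representation of spin $\sum_i j_i$ occurs (with multiplicity one) and no larger spin can appear; this gives $J_0=\sum_i j_i$ at once.

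For the minimum I would establish a lower bound and a matching achievability statement. For the lower bound two constraints combine. First, the parity constraint: coupling two spins $a\otimes b$ produces only total spins $J$ with $2J\equiv 2(a+b)\pmod 2$, and iterating over the whole collection forces $2J\equiv\sum_i 2j_i = 2J_0\pmod 2$ for every $J\in\mathpzc{E}_{\mathpzc{A}}$, so $J\geq\varepsilon$. Second, the triangle inequality: writing any $J\in\mathpzc{E}_{\mathpzc{A}}$ as a component of $j_{\max}\otimes J'$ with $J'$ an achievable spin of $\mathpzc{A}'$, one has $J\geq|j_{\max}-J'|\geq j_{\max}-J_0'=\upsilon_{\mathpzc{A}}$, using $J'\leq J_0'$. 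Together these yield $J_m\geq\max(\upsilon_{\mathpzc{A}},\varepsilon)$.

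The substantive part is achievability, which I would prove by induction on $N$ through the same reduction. Assuming the theorem for $\mathpzc{A}'$, the achievable spins of $\mathpzc{A}$ are exactly $\bigcup_{J_m'\leq J'\leq J_0'}\{\,|j_{\max}-J'|,\ldots,j_{\max}+J'\,\}$, so the minimum attainable $J$ is the distance from $j_{\max}$ to the ladder $\{J_m',J_m'+1,\ldots,J_0'\}$. When $\upsilon_{\mathpzc{A}}>0$ we have $j_{\max}>J_0'$, the nearest rung is $J_0'$, and the minimum is $j_{\max}-J_0'=\upsilon_{\mathpzc{A}}$; when $\upsilon_{\mathpzc{A}}\leq 0$ the point $j_{\max}$ lies inside the ladder's span and, matching parity, the nearest achievable $J'$ is either $j_{\max}$ (giving $J=0$) or $j_{\max}\pm\tfrac12$ (giving $J=\tfrac12$), i.e. exactly $\varepsilon$. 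I expect the main obstacle to be the bookkeeping certifying that this nearest rung genuinely lies in $[J_m',J_0']$: one must check $j_{\max}\geq J_m'=\max(\upsilon_{\mathpzc{A}'},\varepsilon')$ (which follows from $j_{\max}\geq\max\mathpzc{A}'$ and $J_0'\geq\max\mathpzc{A}'$, whence $j_{\max}+J_0'\geq 2\max\mathpzc{A}'$ and so $j_{\max}\geq\upsilon_{\mathpzc{A}'}$, together with $j_{\max}\geq\tfrac12\geq\varepsilon'$) and reconcile the parity of $j_{\max}$ with that of the ladder. Once these are in place the induction closes, giving $J_m=\max(\upsilon_{\mathpzc{A}},\varepsilon)$, which is \eqref{eq:J_m}.
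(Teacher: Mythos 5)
Your proof is correct and follows essentially the same route as the paper's: strip off the largest spin $j_{\max}=\max\mathpzc{A}$, couple it against the coupled representation of the remaining spins, and split into the cases $j_{\max}\geq J_0'$ and $j_{\max}<J_0'$. You are somewhat more careful than the paper in separating the lower bound (triangle inequality plus parity invariance) from achievability, and in actually verifying $j_{\max}\geq J_m'$, which the paper simply asserts as part of Eq.~\eqref{eq:inequalities_obs}.
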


\begin{proof}
Eq. \eqref{eq:J_0} is obvious and we shall not consider it further. The same cannot be said about Eq. \eqref{eq:J_m}.
\par To prove Eq. \eqref{eq:J_m}, it is important to keep in mind that $J_m = \min \mathpzc{E}_{\mathpzc{A}}$. Now, from the multiset $\mathpzc{A}=\{j_1,j_2, \ldots , j_N\}$, we consider the submultiset $\widetilde{\mathpzc{A}}$, defined as
				\begin{equation}
				\widetilde{\mathpzc{A}} := \mathpzc{A}\setminus \{\max \mathpzc{A}\} \ ,
				\end{equation}
\emph{i.e.} $\widetilde{\mathpzc{A}}$ is the submultiset obtained after removing the largest element (or one of the largest elements, if its multiplicity is greater than $1$) of $\mathpzc{A}$ from the latter. The coupled representation of $\widetilde{\mathpzc{A}}$ will yield the multiset $\widetilde{\mathpzc{E}}_{\mathpzc{A}}$ where -- in analogy to Eq. \eqref{eq:multiset_E} --
				\begin{equation}
				\label{eq:multiset_E tilde}
				\widetilde{\mathpzc{E}}_{\mathpzc{A}} := \{\underbrace{\widetilde{J}_0, \ldots ,\widetilde{J}_0}_{\widetilde{\lambda}_0}, \underbrace{\widetilde{J}_1 , \ldots , \widetilde{J}_1}_{\widetilde{\lambda}_1}, \ldots , \underbrace{\widetilde{J}_{m'}, \ldots , \widetilde{J}_{m'} }_{\widetilde{\lambda}_{m'}}\} \ .
				\end{equation}
with $\widetilde{J}_l > \widetilde{J}_{l'}$ if $l<l'$, by convention. We now couple each element of $\widetilde{\mathpzc{E}}_{\mathpzc{A}}$ with $\max \mathpzc{A}$ to generate the multiset $\mathpzc{E}_{\mathpzc{A}}$. Since the coupling of $\widetilde{J}_l \ (\in \widetilde{\mathpzc{E}}_{\mathpzc{A}})$ with $\max \mathpzc{A}$ gives rise to the Clebsch-Gordan decomposition series 
				\begin{equation}
				\max \mathpzc{A}+\widetilde{J}_l , \ \max \mathpzc{A}+\widetilde{J}_l - 1, \ldots , \left|\max \mathpzc{A}-\widetilde{J}_l\right| \ ,
				\end{equation}
-- with all resulting momenta ending up as elements of $\mathpzc{E}_{\mathpzc{A}}$ -- we conclude that 
				\begin{equation}
				\label{eq:min_E def}
				\min \mathpzc{E}_{\mathpzc{A}} = \min \left\lbrace \left. \left|\max \mathpzc{A}-\widetilde{J}_l\right| \  \right| \widetilde{J}_l \in \widetilde{\mathpzc{E}}_{\mathpzc{A}} \right\rbrace \ .
				\end{equation}								
It is important to notice at this point that
				\begin{equation}
				\label{eq:inequalities_obs}
				\max \mathpzc{A} \geq \widetilde{J}_{m'} \qquad \mbox{and} \qquad \widetilde{J}_0 \geq \widetilde{J}_{m'}
				\end{equation}
where the equality in the former \emph{may} hold when $\dim \widetilde{\mathpzc{A}} = 1$, while the equality in the latter holds when $\dim \widetilde{\mathpzc{A}} = 1$. Eq. \eqref{eq:inequalities_obs} leads us to consider two possible scenarios:
\subsection{$\max \mathpzc{A} \geq \widetilde{J}_0$}
In this case, it follows from Eq. \eqref{eq:inequalities_obs} that:
				$
				\max \mathpzc{A} \geq \widetilde{J}_0 \geq \widetilde{J}_{m'} \ .
				$
Therefore, given Eq. \eqref{eq:min_E def} and the fact that $J_l > J_{l'}$ if $l < l'$, we conclude that
				\begin{equation}
				\label{eq:maxA_geq_J_0_tilde}
				\begin{split}
				\min \mathpzc{E}_{\mathpzc{A}} = J_m & = \max \mathpzc{A} - \widetilde{J}_0 \\
				& = 2 \cdot \max \mathpzc{A} - J_0 \ ,
				\end{split} 
				\end{equation}
since $ \max \mathpzc{A} + \widetilde{J}_0 = J_0$.
\subsection{$\max \mathpzc{A} < \widetilde{J}_0$}
When this relation is satisfied, then according to Eq. \eqref{eq:inequalities_obs},
				$
				\widetilde{J}_0 > \max \mathpzc{A} \geq  \widetilde{J}_{m'} 
				$, 
which means that there certainly exists a $\widetilde{J}^*_l \in \widetilde{\mathpzc{E}}_{\mathpzc{A}}$ which satisfies Eq. \eqref{eq:min_E def}. Our aim here though, is not to determine  $\widetilde{J}^*_l$ but the difference $\left|\max \mathpzc{A}-\widetilde{J}^*_l\right|$. In passing, we draw the reader's attention to the fact that the elements of $\widetilde{\mathpzc{E}}_{\mathpzc{A}}$ (and also $\mathpzc{E}_{\mathpzc{A}}$) are either all integers or half-integers, so we can talk of the nature of any of the elements as being an integer or half-integer by just referring to any other element. With this last observation in mind, one can easily show that, 
				\begin{equation}
				\left|\max \mathpzc{A}-\widetilde{J}_l\right| \geq \begin{cases}
				0 , \mbox{ if both $\widetilde{J}_0$ and $\max \mathpzc{A}$ are integers or half-integers } \\
				\frac{1}{2} , \mbox{ if only one of the pair $(\widetilde{J}_0, \max \mathpzc{A})$ is half-integer } \ ,
				\end{cases}
				\end{equation}
since all the elements of $\widetilde{\mathpzc{A}}$ are less or equal to $\max \mathpzc{A}$.
Thus, when $\max \mathpzc{A} < \widetilde{J}_0$, 
				\begin{equation}
				\min \mathpzc{E}_{\mathpzc{A}} = J_m =  \begin{cases}
				0 , \mbox{ if both $\widetilde{J}_0$ and $\max \mathpzc{A}$ are integers or half-integers } \\
				\frac{1}{2} , \mbox{ if only one of the pair $(\widetilde{J}_0, \max \mathpzc{A})$ is half-integer } \ ,
				\end{cases}
				\end{equation}
which can be written more succinctly as,
				\begin{equation}
				\label{eq:maxA_lt_J_0_tilde}
				\begin{split}
				J_m & = \frac{1}{2} \left[2\left(\max \mathpzc{A} + \widetilde{J}_0 \right)   \ \mbox{mod} \ 2 \right]\\
				& =  \frac{1}{2} \left[2J_0   \ \mbox{mod} \ 2 \right] \ .
				\end{split}
				\end{equation}
Putting together Eqs. \eqref{eq:maxA_geq_J_0_tilde} and \eqref{eq:maxA_lt_J_0_tilde} yields Eq. \eqref{eq:J_m}.
\end{proof}

Going back to Eq. \eqref{eq:J_m}, we observe that
				\begin{equation}
				J_m = \begin{cases}
				\upsilon_{\mathpzc{A}} , \mbox{ if }\upsilon_{\mathpzc{A}} \geq 0\ , \\
				0 , \mbox{ if }\upsilon_{\mathpzc{A}} < 0 \ \mbox{and }\upsilon_{\mathpzc{A}}\ \mbox{is an integer} \ , \\
				\frac{1}{2} , \mbox{ if }\upsilon_{\mathpzc{A}} < 0 \ \mbox{and }\upsilon_{\mathpzc{A}}\ \mbox{is half-integer} \ .
				\end{cases}
				\end{equation}	
Since $J_0 - J_\kappa = \kappa, \ \forall \ \kappa \in \{0, \ldots, m\}$, it follows that the number of distinct elements of $\mathpzc{E}_{\mathpzc{A}}$, $N'_{\mathpzc{E}_{\mathpzc{A}}}$, is certainly $N'_{\mathpzc{E}_{\mathpzc{A}}} = J_0 - J_m + 1$ ,
and so $m = N'_{\mathpzc{E}_{\mathpzc{A}}} - 1$. As already noted above, if $J_0$ is half-integer, so are all the elements of $\mathpzc{E}_{\mathpzc{A}}$, and the same applies when $J_0$ is an integer.

\section{The  total $z-$eigenvalue invariant subspaces and their dimensions}\label{sec:momentum_inv_sub_dim}
Each angular momentum $j_i$ of $\mathpzc{A}=\{j_1, j_2, \ldots , j_N\}$ has its own set of possible $z-$eigenvalues, which we indicate as $\mathpzc{M}_i$, \emph{i.e.} 
			\begin{equation}
			\mathpzc{M}_i := \{j_i, j_i - 1 , \ldots , -j_i\} \ , \qquad i \in \{1, 2, \ldots , N\}\ .
			\end{equation}
The cardinality of $\mathpzc{M}_i$, $\dim \mathpzc{M}_i$, is obviously $(2j_i+1)$. To make contact with EC, we note that elements of the set $\mathpzc{M}_i$ can also be represented as
			\begin{equation}
			\label{eq:M_to_n}
			\mathpzc{M}_i := \left\lbrace j_i-n_i \left. \right| \ n_i \in \mathbb{N}_0 \ \wedge \  0 \leq n_i \leq 2j_i \right\rbrace  \ .
			\end{equation}
For a given $j_i$, we denote the set of all possible $n_i$ according to Eq. \eqref{eq:M_to_n} as $\mathpzc{N}_i$; \emph{i.e.} $\mathpzc{N}_i := \{0, 1, 2, \ldots , 2j_i\}$. Evidently, there is a one-to-one correspondence between the set $\mathpzc{N}_i$ and the set $\mathpzc{M}_i$. Due to this bijective relation, we can perform operations with $\mathpzc{N}_i$ instead of $\mathpzc{M}_i$. This is very convenient because -- unlike $\mathpzc{M}_i$ -- the set $\mathpzc{N}_i$ is always a subset of $\mathbb{N}_0$, independent of whether $j_i$ is half-integer or not. Furthermore, we point out that $n_i$ is an eigenvalue of the number operator for the Holstein-Primakoff bosons of the single spin $j_i$.
\par Each distinct element $J_l$ of the coupled representation of $\mathpzc{A}$, \emph{i.e.} $\mathpzc{E}_{\mathpzc{A}}$, also has its own set $\mathpzc{M}'_l$ of possible total $z-$eigenvalues, and -- in analogy to Eq. \eqref{eq:M_to_n} -- we can write
			\begin{equation}
			\mathpzc{M}'_l := \left\lbrace J_l-n_l \left. \right| \ n_l \in \mathpzc{N}_l \right\rbrace \ ,
			\end{equation}			  
where $\mathpzc{N}_l := \{0, 1, 2, \ldots, 2J_l\}$. The multiset sum of the various $\mathpzc{M}'_l$ yields the multiset $\mathpzc{M}'$, that is,
			\begin{equation}
			\label{eq:def_M_prime}
			\mathpzc{M}' := \biguplus^m_{l=0} \ \mathpzc{M}'_l\ .
			\end{equation}
$\mathpzc{M}'$ contains as elements all possible total $z-$eigenvalues in the coupled representation, each repeated a certain number of times $\Omega$. The multiplicity $\Omega$ of each total $z-$eigenvalue reflects the number of possible $J_l \in \mathpzc{E}_{\mathpzc{A}}$ from which it can originate. 
\par To determine these multiplicities, what we do is to first make use of the fact that the $z-$eigenvalues in the coupled representation are simply given by the sum of their uncoupled counterparts. On that account, it follows from Eq. \eqref{eq:M_to_n} that
			\begin{equation}
			\label{eq:multiset_M_coup_rep}
			\begin{split}
			\mathpzc{M}' & =\left\lbrace \left. \sum_i \left( j_i - n_i\right)  \right| j_i \in \mathpzc{A} \ , \ n_i \in \mathpzc{N}_i \right\rbrace \\
			& = \left\lbrace \left.   J_0 -\sum_i n_i  \right|  \ n_i \in \mathpzc{N}_i \right\rbrace
			\end{split}
			\end{equation}
where we have made use of Eq. \eqref{eq:J_0}. Given that the sum of a finite number of natural numbers is also obviously a natural number, we can rewrite Eq. \eqref{eq:multiset_M_coup_rep} as
			\begin{equation}
			\label{eq:M_pr}
			\mathpzc{M}' = \left\lbrace \left.   J_0 - n  \right|  n \in \mathpzc{N} \right\rbrace
			\end{equation}			 
where,
			\begin{subequations}
			\label{eq:deff_n_N}
			\begin{align}
			n & := \sum_i n_i\ , \qquad n_i \in \mathpzc{N}_i \label{eq:deff_n_N_a} \\
			\mathpzc{N} & := \left\lbrace 0 , 1 , \ldots , \sum_i \ \max \mathpzc{N}_i \right\rbrace = \left\lbrace 0 , 1 , \ldots , 2 J_0 \right\rbrace \ .
			\end{align}
			\end{subequations}
It is therefore straightforward to see that the multiplicity of the element $(J_0 - n)$ of the multiset $\mathpzc{M}'$ is just the number of compositions of the integer $n$ according to Eq. \eqref{eq:deff_n_N_a}. In fact, Eq. \eqref{eq:deff_n_N_a} defines a "multi-restricted composition" of $n$, whereby in addition to the number of parts restricted to $N$, each part $n_i$ is also restricted to the set $ \mathpzc{N}_i$. Contrary to common compositions, the number zero is an admissible part here for the sake of mathematical consistency, though the results of the computations which follow remain unchanged if one ignores it. We shall come back to this point later on.
\par We shall assume the notation $M_n := J_0 -n$. And so there is a one-to-one correspondence between $\{n\}$ and the set of distinct total spin orientations $\{M_n\}$. 
\begin{theorem}
Let  $\Omega_{\mathpzc{A},n}$ represent the number of elements of the multiset $\mathpzc{E}_{\mathpzc{A}}$ which admit $M_n$ as a possible eigenvalue of its $z-$component. Then,  $\Omega_{\mathpzc{A},n}$ is given by the following expression:
				\begin{equation}
				\label{eq:dim_B(n)}
				\boxed{
				\Omega_{\mathpzc{A},n} = \sum_{A(n)\  \in \ P(\mathpzc{A};n) } \ \prod^{\dim \widetilde{A}(n)-1}_{\nu=0} 
				\binom{\omega_{n}(\nu) - (1-\delta_{\nu,0}) \sum^{\nu-1}_{l=0} s_{n}(l)}{s_{n}(\nu)} }
				\end{equation}
where,
	\begin{itemize}
	\item $P(\mathpzc{A};n) := $ the set of all (unordered) multi-restricted partition of the integer $n$ into at most $\dim \mathpzc{A}=N$ parts, with the value of each part being at most only one of the $N$ values of $\{2j_1,\ldots , 2j_N \}$, where $j_i \in \mathpzc{A}$ and with $0$ as an admissible part; \\
	\item $A(n) :=$ an element of $P(\mathpzc{A};n)$; the sum  in Eq. \eqref{eq:dim_B(n)} is over all $A(n)$;\\
	\item $\widetilde{A}(n) := $ the set of distinct elements (or parts) of $A(n)$. $\nu$ runs over all elements $\{a_{n}(\nu)\}$ of $\widetilde{A}(n)$: with $a_{n}(0)$ as the largest integer, $a_{n}(1)$ the second largest integer, and so forth;\\
	\item $\omega_{n}(\nu) := $ the number of $2j_i$ greater or equal to $a_{n}(\nu)$;
	\item $s_{n}(\nu) := $ the number of times $a_{n}(\nu)$ appears in $A(n)$.
	\end{itemize} 
\end{theorem}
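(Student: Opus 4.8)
The plan is to reduce the claim to a purely enumerative statement and then evaluate the resulting count by placing parts from largest to smallest. First I would invoke the discussion culminating in Eq.~\eqref{eq:deff_n_N_a}: by construction, $\Omega_{\mathpzc{A},n}$ is the number of multi-restricted compositions of $n$, i.e. the number of ordered tuples $(n_1,\dots,n_N)$ with $n_i\in\mathpzc{N}_i=\{0,1,\dots,2j_i\}$ and $\sum_i n_i=n$. The whole problem is therefore to count such tuples, each of which assigns a value $n_i$ to slot $i$ subject to the capacity constraint $n_i\le 2j_i$.

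Next I would stratify these compositions by their underlying content. Forgetting the order of a composition leaves the multiset $\{n_1,\dots,n_N\}$, which is precisely an element $A(n)\in P(\mathpzc{A};n)$; admitting $0$ as a part is exactly what lets every composition---including those with some $n_i=0$---be recorded as a partition into at most $N$ parts. Hence $\Omega_{\mathpzc{A},n}=\sum_{A(n)\in P(\mathpzc{A};n)}C(A(n))$, where $C(A(n))$ denotes the number of compositions whose content equals $A(n)$, and it remains only to show that $C(A(n))$ equals the product in Eq.~\eqref{eq:dim_B(n)}.

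To evaluate $C(A(n))$ I would fill the $N$ slots by inserting the distinct values $a_n(0)>a_n(1)>\cdots$ one value at a time, in decreasing order. A copy of $a_n(\nu)$ may occupy slot $i$ only if $2j_i\ge a_n(\nu)$, and by definition there are $\omega_n(\nu)$ such slots. The key structural fact is \emph{monotonicity of eligibility}: since the capacities $2j_i$ are fixed per slot, any slot able to hold a strictly larger value $a_n(l)$ (with $l<\nu$) can also hold $a_n(\nu)$, so the eligible sets are nested, $\mathrm{elig}(a_n(0))\subseteq\mathrm{elig}(a_n(1))\subseteq\cdots$. Consequently all $\sum_{l=0}^{\nu-1}s_n(l)$ slots already occupied by larger values lie among the $\omega_n(\nu)$ slots eligible for $a_n(\nu)$, leaving $\omega_n(\nu)-\sum_{l=0}^{\nu-1}s_n(l)$ free, from which the $s_n(\nu)$ identical copies are chosen in $\binom{\omega_n(\nu)-\sum_{l=0}^{\nu-1}s_n(l)}{s_n(\nu)}$ ways. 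The factor $(1-\delta_{\nu,0})$ simply suppresses the empty subtraction at $\nu=0$. By the product rule over $\nu$ this gives $C(A(n))$, and summing over $A(n)$ yields Eq.~\eqref{eq:dim_B(n)}.

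The step I expect to demand the most care is this nested-subtraction argument, i.e. verifying the eligibility containment that makes $\sum_{l<\nu}s_n(l)$ the \emph{exact} number of pre-occupied eligible slots rather than an over- or under-count. Two bookkeeping points should also be dispatched here: first, that the constraint ``at most $N$ parts'' guarantees $\sum_\nu s_n(\nu)\le N$, so the placement never runs out of slots; and second, that listing $0$ as the smallest value changes nothing, because once all positive parts are placed the remaining slots are forced to receive $0$, contributing a factor $\binom{k}{k}=1$ with $k$ the number of leftover slots---consistent with the remark following Eq.~\eqref{eq:deff_n_N}. Finally, for any $A(n)\in P(\mathpzc{A};n)$ that cannot actually be realized the relevant binomial vanishes, so such terms contribute $0$ and do no harm.
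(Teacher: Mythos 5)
Your proposal is correct and follows essentially the same route as the paper: the paper likewise stratifies the multi-restricted compositions by their underlying multiset $A(n)\in P(\mathpzc{A};n)$, views the $N$ spins as capacity-limited channels (your slots), places the distinct parts in decreasing order, and uses exactly the nested-eligibility observation to justify subtracting $\sum_{l<\nu}s_n(l)$ already-occupied channels from the $\omega_n(\nu)$ eligible ones. Your extra bookkeeping about the zero parts contributing a trivial factor matches the paper's own remark following the theorem.
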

\begin{proof}
It is clear that for any given finite multiset $\mathpzc{A}$ we can define the multiset $\mathpzc{N}_{max}$, where
				\begin{equation}
				\mathpzc{N}_{max} (\mathpzc{A}) := \left\lbrace \max  \mathpzc{N}_i  \right\rbrace_{\mathpzc{A}} = \{2j_1, 2j_2, \ldots , 2j_N\}\ .
				\end{equation}
Say $P(\mathpzc{A};n)$ the set of all (unordered) multi-restricted partitions of the integer $n$ into $\dim \mathpzc{A}(=N)$ parts, with the restriction that the $i-$th part, $n_i$, belongs to the set $\mathpzc{N}_i$. That is,
				\begin{equation}
				P(\mathpzc{A};n) := \{A(n;1), A(n;2), \ldots , A(n;q_n) \}
				\end{equation}
where $A(n;k)$ is the $k-$th multi-restricted partition of $n$ in reference to $\mathpzc{A}$ according to the description given above; $q_n := \dim P(\mathpzc{A};n) $. For convenience, we choose to write the elements of $P(\mathpzc{A};n)$ as multisets, \emph{i.e.}
				\begin{equation}
				A(n;k) := \{ a^{s_n(k,0)}_{k,0}, a^{s_n(k,1)}_{k,1} , \ldots , a^{s_n(k,\mu_{k,n})}_{k,\mu_{k,n}}\}
				\end{equation}
where the $a_{k,\nu}$s are positive integers (in other words, parts of the restricted partition of $n$); $s(k,\nu)$ is the multiplicity of the part $a_{k,\nu}$ and $\mu_{k,n} := \dim A(n;k) -1$. Evidently,
				\begin{equation}
				n= \sum^{\mu_{k,n}}_{\nu=0} s_n(k,\nu) \cdot a_{k,\nu} \ , \qquad \ N = \sum^{\mu_{k,n}}_{\nu=0} s_n(k,\nu) \ .
				\end{equation}
Finally, let us denote the set of all distinct elements of $A(n;k)$ as $\widetilde{A}(n;k)$:
				\begin{equation}
				\widetilde{A}(n;k) := \{ a_{k,0}, a_{k,1} , \ldots , a_{k,\mu_{k,n}}\} \ .
				\end{equation}
To determine the composition $\Omega_{\mathpzc{A},n}$ of $n$ as described above according to \eqref{eq:deff_n_N_a}, we may consider the elements of $\mathpzc{A}$ as unconnected channels ($N$ in total) among which we need to distribute \emph{successfully} the elements of $A(n;k)$. The caveats here are: 1) the positive integers the channel $i$ can accommodate cannot exceed $\max \mathpzc{N}_i$; 2) given the multiset $A(n;k)$, each channel can accommodate just an element of $A(n;k)$; 3) by "successfully" we mean the elements of $A(n,k)$ must be distributed among the $N$ channels without leaving any behind, according to the preceding caveats. Given $A(n;k)$, we denote the number of ways the distribution can be successfully achieved as $\varrho[A(n;k)]$. It is evident then that
				\begin{equation}
				\label{eq:Omega_n_a}
				\Omega_{\mathpzc{A},n} = \sum_k \varrho[A(n;k)] \ , \qquad \ A(n;k) \in P(\mathpzc{A};n) . 
				\end{equation}
How do we then determine $\varrho[A(n;k)]$? In order to ensure a successful distribution of the elements of $A(n;k)$ among the $N$ channels, we must distribute first $\max \widetilde{A}(n;k)$, followed by the second largest integer of $\widetilde{A}(n;k)$, and so on. To simplify matters, we assume that $a_{k,\nu} > a_{k,\nu'}$ if $\nu < \nu'$. Thus, $\max \widetilde{A}(n;k) = a_{k,0}$. If we denote the number of ways of distributing the number $a_{k,\nu} \in \widetilde{A}(n;k) $ as $D_n(k,\nu)$, then it follows that
				\begin{equation}
				D_n(k,0) = \binom{
				\omega_n(0)}{
				s_n(k,0)
				} \ ,
				\end{equation}
where $\omega_n(\nu)$ is the number of channels which can successfully take in $a_{k,\nu}$. In general, $\omega_n(\nu)$ is the number of elements of the multiset $\mathpzc{N}_{max}(\mathpzc{A})$ which are not less than $a_{k,\nu}$:
				\begin{equation}
				\omega_n(\nu) = \sum^N_{i=1}  H(2j_i-a_{k,\nu}) \, \qquad \ 2j_i \in \mathpzc{N}_{max}(\mathpzc{A}) \ ,
				\end{equation}
where $H(x)$ is the Heaviside step function defined in Eq. \eqref{eq:Heaviside}.			 
\par We now move on to the second largest integer of $\widetilde{A}(n;k)$: $a_{k,1}$. The number of channels which can take in this integer is $\omega_n(1)$, which includes the $\omega_n(0)$ channels since $a_{k,0} > a_{k,1}$. But of the $\omega_n(0)$ channels,  $s_n(k,0)$ of them have already been occupied by $a_{k,0}$. We are therefore left effectively with $\omega_n(1)-s_n(k,0)$ channels available to accommodate the $s_n(k,1)$ integers of value $a_{k,1}$. Hence,
				\begin{equation}
				D_n(k,1) = \binom{
				\omega_n(1) - s_n(k,0)}{
				s_n(k,1)
				} \ .
				\end{equation}
Following similar arguments, for $a_{k,2}$ we shall have at our disposal $\omega_n(2)-s_n(k,0)-s_n(k,1)$ channels to distribute $s_n(k,2)$ of it. And so,
				\begin{equation}
				D_n(k,2) = \binom{
				\omega_n(1) - s_n(k,0) - s_n(k,1) }{
				s_n(k,2)
				} \ .
				\end{equation}
\par We easily infer from the above arguments that, for a given $\mathpzc{A}$,
				\begin{equation}
				\label{eq:Omega_n_b}
				D_n(k,\nu) = \binom{
				\omega_n(\nu) - (1-\delta_{0, \nu}) \sum^{\nu-1}_{l=0}s_n(k,l)}{
				s_n(k,\nu)
				} 	\ .		
				\end{equation}
It is easy to realize that for a given $A(n;k)$, $ \varrho[A(n;k)]$ is the product of the various $D_n(k,\nu)$, \emph{i.e.}
				\begin{equation}
				\label{eq:Omega_n_c}
				 \varrho[A(n;k)] = \prod^{\dim \widetilde{A}(n;k)-1}_{\nu=0} D_n(k,\nu) \ .
				\end{equation}
Putting together Eqs. \eqref{eq:Omega_n_a}, \eqref{eq:Omega_n_c} and \eqref{eq:Omega_n_b}, and simply writing $s_n(k, \nu) \to s_n(\nu)$, $A(n;k) \to A(n)$, $\widetilde{A}(n;k) \to \widetilde{A}(n)$, we get Eq. \eqref{eq:dim_B(n)}.
\end{proof} %
We shall refer to Eq. \eqref{eq:dim_B(n)} as the \emph{multi-restricted composition method} for determining $\{\Omega_{\mathpzc{A},n}\}$. Interestingly, $\Omega_{\mathpzc{A},n}$ may be interpreted as the number of ways of distributing $n$ Holstein-Primakoff bosons\citep{art:Holstein-1940} among $N$ spins, knowing that the $i-$th spin can accommodate at most $2j_i$ of such bosons.
\par One can now understand why the parts with value $n_i=0$ do not contribute to $\Omega_{\mathpzc{A},n}$: the effective number of ways of distributing any allowed multiplicity of the number zero among the channels, according to the above criteria, is always $1$. Furthermore, it is easy to realize  that when $\mathpzc{A}=\left\lbrace \frac{1}{2}^N\right\rbrace$ -- \emph{i.e.} in the case of $N$ spin-$\frac{1}{2}$s, Eq. \eqref{eq:dim_B(n)} reduces to
				\begin{equation}
				\label{eq:spin_1/2_Omega_limit}
				\Omega_{\mathpzc{A},n} = \binom{
				N }{
				n
				} \ .
				\end{equation}
\par An important property of the integers $\{\Omega_n\}$ (in the following, we shall on some occasions simply write $\Omega_n$ instead of $\Omega_{\mathpzc{A},n}$) is that
				\begin{equation}
				\label{eq:symmetric_Omega}
				\Omega_{n} = \Omega_{2J_0-n} \ , \ \qquad \forall n \in \mathpzc{N} \ ,
				\end{equation}
which implies a reciprocal distribution of these integers. The relation in Eq. \eqref{eq:symmetric_Omega} holds because $\Omega_n$ and $\Omega_{2J_0-n}$ are the dimensions of the subspaces of total $z-$eigenvalues $M_n$ and $M_{2J_0-n}$, respectively. But from Eq. \eqref{eq:M_pr}, we have that   $\left| M_n \right| = \left| M_{2J_0-n}\right|$, \emph{i.e.} the two subspaces are related by T-symmetry -- which necessarily implies that they must have the same dimension. Moreover, 
				\begin{equation}
				\label{eq:max_Omega_n}
				\max \ \{\Omega_n \} = \Omega_{\lfloor J_0 \rfloor} \ .
				\end{equation}
where $\lfloor \bullet \rfloor$ is the floor function. We add that $\{\Omega_n \}$ is a multiset and so the subspace characterized by $n= {\lfloor J_0 \rfloor}$ may not be the only subspace of maximum dimension.
\subsection{The generating function $G_{\mathpzc{A},\Omega}(q)$ for $\{\Omega_{\mathpzc{A},n}\}$}\label{subsec:gen_mom_inv_sub_dim}
\begin{theorem}
The generating function $G_{\mathpzc{A},\Omega}(q)$ for the integers $\{\Omega_{\mathpzc{A},n}\}$ is
				\begin{equation}
				\label{eq:gen_Omega_n}
				G_{\mathpzc{A},\Omega}(q)  = \prod_\alpha \left[ \Lambda_\alpha(q) \right]^{N_\alpha}=\sum^{2J_0}_{n=0} \Omega_{\mathpzc{A},n} \ q^n 
				\end{equation}			
where the index $\alpha$ runs over distinct elements $\{j_\alpha\}$ of $\mathpzc{A}$, $N_\alpha$ is the multiplicity of $j_\alpha$ in $\mathpzc{A}$, and $\Lambda_\alpha(q)$ is the polynomial function
				\begin{equation}
				\label{eq:def_Lambda}
				\Lambda_\alpha (q) := 1 + q + q^2 + \ldots + q^{2j_\alpha} \ .
				\end{equation}
\end{theorem}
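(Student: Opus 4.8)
The plan is to recognize $\prod_\alpha \left[\Lambda_\alpha(q)\right]^{N_\alpha}$ as the ordinary generating function of the multi-restricted compositions counted by $\Omega_{\mathpzc{A},n}$, and then simply read off the coefficient of $q^n$. The right starting point is not the explicit formula of Eq.~\eqref{eq:dim_B(n)} but the elementary combinatorial characterization established in the passage leading up to Eq.~\eqref{eq:deff_n_N}: $\Omega_{\mathpzc{A},n}$ equals the number of tuples $(n_1,\ldots,n_N)$ with $n_i \in \mathpzc{N}_i = \{0,1,\ldots,2j_i\}$ and $\sum_i n_i = n$, i.e. the number of multi-restricted compositions of $n$ defined by Eq.~\eqref{eq:deff_n_N_a}.

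First I would attach to each single spin $j_i$ the ordinary generating function recording its admissible boson occupation numbers $n_i \in \mathpzc{N}_i$, namely
\begin{equation}
\sum_{n_i=0}^{2j_i} q^{n_i} = 1 + q + q^2 + \cdots + q^{2j_i} =: \Lambda_i(q)\ ,
\end{equation}
where the exponent of $q$ tracks the number of Holstein--Primakoff bosons carried by that spin, and $\Lambda_i(q)$ is simply the per-spin instance of the polynomial $\Lambda_\alpha(q)$ of Eq.~\eqref{eq:def_Lambda}. Next I would form the product over all $N$ spins and expand it by the distributive law,
\begin{equation}
\prod_{i=1}^{N} \Lambda_i(q) = \prod_{i=1}^{N} \left( \sum_{n_i=0}^{2j_i} q^{n_i} \right) = \sum_{n_1=0}^{2j_1} \cdots \sum_{n_N=0}^{2j_N} q^{\,n_1 + n_2 + \cdots + n_N}\ .
\end{equation}
The coefficient of $q^n$ in this expansion counts precisely the tuples $(n_1,\ldots,n_N)$ with each $n_i \in \mathpzc{N}_i$ and $\sum_i n_i = n$, which is exactly $\Omega_{\mathpzc{A},n}$. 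Hence $\prod_i \Lambda_i(q) = \sum_n \Omega_{\mathpzc{A},n}\, q^n$.

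It then remains to tidy up the two cosmetic points in the statement. Grouping the $N_\alpha$ identical factors $\Lambda_i(q) = \Lambda_\alpha(q)$ belonging to each distinct spin value $j_\alpha$ converts $\prod_i \Lambda_i(q)$ into $\prod_\alpha \left[\Lambda_\alpha(q)\right]^{N_\alpha}$, the claimed closed form; and the top degree of the product is $\sum_i 2j_i = 2J_0$ by Eq.~\eqref{eq:J_0}, which fixes the summation range $n = 0,\ldots,2J_0$.

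The argument is essentially the standard product-to-convolution principle for ordinary generating functions, so I do not expect a deep obstacle. The only point that warrants care is confirming that the coefficient extraction reproduces the multi-restricted composition count \emph{without} over- or under-counting — that each admissible tuple $(n_1,\ldots,n_N)$ contributes to $q^n$ exactly once, including tuples in which some $n_i = 0$. This is guaranteed by the distributive expansion, and it is consistent with the earlier remark that zero parts contribute a factor of $1$ to $\Omega_{\mathpzc{A},n}$.
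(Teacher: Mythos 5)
Your proof is correct, but it follows a genuinely different route from the one in the paper. You argue directly: since $\Omega_{\mathpzc{A},n}$ has already been identified (in the passage surrounding Eq.~\eqref{eq:deff_n_N_a}) with the number of tuples $(n_1,\ldots,n_N)$, $n_i\in\mathpzc{N}_i$, summing to $n$, the product $\prod_i\bigl(\sum_{n_i=0}^{2j_i}q^{n_i}\bigr)$ expanded by distributivity has exactly this count as its coefficient of $q^n$; grouping identical factors and noting the top degree $2J_0$ finishes the job. The paper instead works backwards from the dimension identity \eqref{eq:dim_identity}, $\prod_\alpha(2j_\alpha+1)^{N_\alpha}=\sum_n\Omega_{\mathpzc{A},n}$, views the left side as the $q\to 1$ limit of the sought generating function, and then argues that each per-spin factor must be the $q$-analogue $[2j_\alpha+1]_q$ because it must have positive coefficients, degree $2j_\alpha$, and value $2j_\alpha+1$ at $q=1$. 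That derivation presupposes without justification that $G_{\mathpzc{A},\Omega}$ factors spin-by-spin and that every coefficient of each factor is strictly positive, so it reads more as a heuristic identification than a proof; your product-to-convolution argument supplies exactly the combinatorial content that closes those gaps, at the price of leaning explicitly on the earlier characterization of $\Omega_{\mathpzc{A},n}$ as a multi-restricted composition count. The one point worth making explicit in your write-up is that this characterization is itself the statement that the $M_n$-eigenspace dimension equals the number of admissible tuples, which the paper establishes via Eqs.~\eqref{eq:multiset_M_coup_rep}--\eqref{eq:deff_n_N}; with that citation in place your argument is complete and, in my view, preferable.
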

\begin{proof}
To prove Eq. \eqref{eq:gen_Omega_n}, it is advantageous to recall that each $\Omega_n$ characterizes the dimension of the invariant subspace of constant $M_n \in \mathpzc{M}'$. The sum total of the dimension of these subspaces must necessarily be the same as the tensor space of the spaces of $\mathpzc{A}$ elements. This yields us the identity
				\begin{equation}
				\label{eq:dim_identity}
				\prod_\alpha (2j_\alpha +  1)^{N_\alpha} = \sum^{2J_0}_{n=0} \Omega_{\mathpzc{A},n} \ .
				\end{equation}
The LHS of Eq. \eqref{eq:dim_identity} can be seen as the result of the limit 
				\begin{equation}
				\lim_{q \to 1} \ \sum^{2J_0}_{n=0} \Omega_{\mathpzc{A},n} \ q^n  \ .
				\end{equation}
We may thus seek to find the polynomial $G_{\mathpzc{A},\Omega}(q)$ such that
				\begin{equation}
				\label{eq:def_G_Omega_n}
				\lim_{q \to 1}\  G_{\mathpzc{A},\Omega}(q) = \prod_\alpha (2j_\alpha +  1)^{N_\alpha} \ .
				\end{equation}
It is manifest from Eq. \eqref{eq:def_G_Omega_n} that the latter is satisfied if there exists a polynomial in $q$, $g_{\mathpzc{A},\Omega,\alpha}(q)$, such that 
				\begin{equation}
				\label{eq:g_Omega_alpha_a}
				\lim_{q \to 1} \ g_{\mathpzc{A},\Omega,\alpha}(q) = 2j_\alpha + 1 \ .
				\end{equation}
$ g_{\mathpzc{A},\Omega,\alpha}(q)$ must have positive coefficients and be of degree $2j_\alpha$ so as to ensure that $\deg \left[ \ G_{\mathpzc{A}, \Omega}(q)\right] = \deg \left[\ \sum^{2J_0}_{n=0} \Omega_n \ q^n \right] = 2J_0$. From these conditions, we conclude that all the coefficients of the polynomial $ g_{\mathpzc{A},\Omega,\alpha}(q)$ must be of value $1$. But this implies that $g_{\mathpzc{A},\Omega,\alpha}(q)$ is none other but the $q-$analogue of the integer $2j_\alpha + 1$, \emph{i.e.}
				\begin{equation}
				\label{eq:g_Omega_alpha_b}
				g_{\mathpzc{A},\Omega,\alpha}(q) = \left[ 2j_\alpha + 1\right]_q \ ,
				\end{equation}
where by definition\cite{book:Kac-2001},
				\begin{equation}
				\label{eq:q-analogue_expans}
				\left[ n \right]_q := \frac{q^n - 1}{q-1}  = 1 + q + q^2 + \ldots + q^{n-1} \ .
				\end{equation}
Accordingly, it follows from Eqs. \eqref{eq:def_G_Omega_n}, \eqref{eq:g_Omega_alpha_a} and \eqref{eq:g_Omega_alpha_b} that
				\begin{equation}
				\label{eq:gen_Omega_n_c}
				G_{\mathpzc{A},\Omega}(q) =  \prod_\alpha \left(\left[2j_\alpha +  1\right]_q\right)^{N_\alpha} \ .
				\end{equation}
As should be expected, $\deg \left[ \ G_{\mathpzc{A}, \Omega}(q)\right] = \deg \left[\ \sum^{2J_0}_{n=0} \Omega_n \ q^n \right] = 2J_0$.
\end{proof}
Going back to the identity in Eq. \eqref{eq:symmetric_Omega}, we conclude that $G_{\mathpzc{A},\Omega}(q)$ is a \emph{reciprocal} polynomial. But that is not all: it is also \emph{unimodal}, owing to Eq. \eqref{eq:max_Omega_n} and the fact that $\Omega_n \leq \Omega_{n'}$, for $ n< n'$ and $ 0\leq \{n,n'\} \leq \lfloor{J_0\rfloor}$.
\par The generating function $G_{\mathpzc{A},\Omega}(q)$ allows an alternative and more analytic expression for the integers $\Omega_{\mathpzc{A},n}$. As it happens, the following theorem holds:
\begin{theorem}\label{thm:generalized_dim_B(n)}
Given the multiset $ \mathpzc{A} $, the integer $ \Omega_{\mathpzc{A},n} $ -- as defined above -- satisfies the relation,
				\begin{equation}
				\label{eq:generalized_dim_B(n)}
				\boxed{
				\Omega_{\mathpzc{A},n} = \sum_{\substack{\sum^{\sigma}_{\alpha = 1} (2j_\alpha + 1)s_\alpha \leq n \\ 0 \leq s_\alpha \leq N_\alpha}} (-1)^{s_1 + s_2 + \ldots + s_\sigma} 
				\binom{
				N + n - 1 - \sum^{\sigma}_{\alpha = 1} (2j_\alpha + 1)s_\alpha}{
				N- 1
				} 
				\binom{
				N_1}{
				s_1
				}\ldots
				\binom{
				N_\sigma }{
				s_\sigma
				}  }
				\end{equation}
where $ \sigma $ is the number of distinct angular momenta present in the multiset $\mathpzc{A}$; the index $ \alpha $ runs over all distinct elements of $ \mathpzc{A} $; $j_\alpha$ is the $\alpha-$th distinct element of $\mathpzc{A}$ with multiplicity $N_{\alpha}$ and $ 0 \leq n \leq \sum_{\alpha} 2N_{\alpha} j_{\alpha} = 2J_0$.
\end{theorem}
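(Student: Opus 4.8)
The plan is to obtain Eq.~\eqref{eq:generalized_dim_B(n)} by extracting the coefficient of $q^n$ directly from the generating function established immediately above. Since Eq.~\eqref{eq:gen_Omega_n} gives $G_{\mathpzc{A},\Omega}(q)=\prod_\alpha\left([2j_\alpha+1]_q\right)^{N_\alpha}=\sum_{n=0}^{2J_0}\Omega_{\mathpzc{A},n}\,q^n$, the integer $\Omega_{\mathpzc{A},n}$ is nothing but $[q^n]\,G_{\mathpzc{A},\Omega}(q)$, so the entire task reduces to a routine coefficient extraction.

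First I would put each $q$-analogue into closed rational form via $[2j_\alpha+1]_q=\frac{1-q^{2j_\alpha+1}}{1-q}$, so that, collecting the denominators into a single power of $(1-q)$,
\[
G_{\mathpzc{A},\Omega}(q)=\frac{\prod_\alpha\left(1-q^{2j_\alpha+1}\right)^{N_\alpha}}{(1-q)^N},
\]
where $N=\sum_\alpha N_\alpha$ is the total number of spins. The numerator and denominator are then expanded separately. For the numerator I would apply the binomial theorem to each factor,
\[
\prod_\alpha\left(1-q^{2j_\alpha+1}\right)^{N_\alpha}=\sum_{0\le s_\alpha\le N_\alpha}(-1)^{s_1+\ldots+s_\sigma}\binom{N_1}{s_1}\cdots\binom{N_\sigma}{s_\sigma}\,q^{\sum_\alpha(2j_\alpha+1)s_\alpha},
\]
while for the denominator I would use the generalized geometric (negative-binomial) series $\frac{1}{(1-q)^N}=\sum_{k\ge0}\binom{N+k-1}{N-1}q^k$, whose coefficients enumerate the weak compositions of $k$ into $N$ nonnegative parts.

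Finally I would multiply the two expansions and read off the coefficient of $q^n$. Writing $S:=\sum_\alpha(2j_\alpha+1)s_\alpha$ for the exponent contributed by the numerator, matching powers forces the denominator to supply $q^{n-S}$; this requires $n-S\ge0$, which is precisely the summation constraint $S\le n$, and it introduces the factor $\binom{N+n-1-S}{N-1}$. Assembling the numerator sign and binomial factors with this convolution factor reproduces Eq.~\eqref{eq:generalized_dim_B(n)} exactly.

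There is no deep obstacle here: the argument is a mechanical coefficient extraction from a known generating function. The only points requiring minor care are (i) correctly merging the $N_\alpha$ separate $\frac{1}{1-q}$ factors into the single $(1-q)^N$, and (ii) ensuring the lower index $n-S$ of the negative-binomial coefficient remains nonnegative, which is exactly what pins down the upper limit $S\le n$ of the sum.
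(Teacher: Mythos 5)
Your proposal is correct and follows exactly the paper's own route: rewriting $G_{\mathpzc{A},\Omega}(q)=(1-q)^{-N}\prod_\alpha(1-q^{2j_\alpha+1})^{N_\alpha}$, expanding the numerator by the binomial theorem and the denominator as a negative binomial series, and reading off the coefficient of $q^n$. You merely spell out the convolution step that the paper leaves implicit.
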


\begin{proof}
The prove is very simple. Surely, from Eqs. \eqref{eq:gen_Omega_n_c} and \eqref{eq:q-analogue_expans} we may rewrite $G_{\mathpzc{A},\Omega}(q)$ as
					\begin{equation}
					G_{\mathpzc{A},\Omega}(q) = (1-q)^{-N} \prod_\alpha \left(1-q^{2j_\alpha + 1} \right)^{N_\alpha} \ .
					\end{equation}
After expanding the factor  with negative exponent on the RHS as a negative binomial series (see for example \citep{book:Riordan-2002}) and the other factors by the normal binomial theorem, and collecting terms with the same $q^n$, we get the result in Eq. \eqref{eq:generalized_dim_B(n)}.
\end{proof}
Eq. \eqref{eq:generalized_dim_B(n)} is a generalization of a result in \cite{art:Bollinger-1993}, and it is readily implementable computationally as to Eq. \eqref{eq:dim_B(n)}. As far as we know, Eq. \eqref{eq:generalized_dim_B(n)} is also the most concise general analytical formula for calculating $\Omega_{\mathpzc{A},n}$. The summation reduces to very few terms, given the condition that $\sum^{\sigma}_{\alpha = 1} (2j_\alpha + 1)s_\alpha \leq n$. We shall term Eq. \eqref{eq:generalized_dim_B(n)} as the \emph{generalized binomial method} for determining $\Omega_{\mathpzc{A},n}$.

\section{The multiplicities $\{\lambda_{\mathpzc{A},\kappa}\}$ and corresponding generating function $G_{\mathpzc{A},\lambda}(q)$}\label{sec:multiplicities}
Given that the multiset $\mathpzc{M}'$ (see Eq. \eqref{eq:def_M_prime}) is generated by $\mathpzc{E}_{\mathpzc{A}}$ (see Eq. \eqref{eq:multiset_E}) and $\Omega_{\mathpzc{A},n}$ is the dimension of the invariant subspace of constant $M_n \in \mathpzc{M}'$, then 
				\begin{equation}
				\label{eq:Omega_lambda_rel}
				\Omega_{\mathpzc{A},n} = \sum^n_{\kappa=0} \lambda_{\mathpzc{A},\kappa} \ , \qquad \ n \in \{0,1, \ldots , m\}
				\end{equation}				 
since if $J_l < \left| M_{n} \right|$, then $M_{n} \not\in \mathpzc{M}'_l$ (recall also that according to the convention employed in this article, $J_l > J_{l'}$ if $l < l'$). From Eq. \eqref{eq:Omega_lambda_rel}, we derive that the multiplicity $\lambda_{\mathpzc{A},\kappa}$ of the distinct element $J_\kappa$ of $\mathpzc{E}_{\mathpzc{A}}$ is therefore given by the relation
				\begin{equation}
				\label{eq:lambda_recursive}
				\lambda_{\mathpzc{A},\kappa} = \Omega_{\mathpzc{A},\kappa} - (1-\delta_{\kappa , 0}) \  \Omega_{\mathpzc{A},\kappa - 1} \ , \qquad \kappa \in \{0,1, \ldots , m\} \ .
				\end{equation}
Although this relation holds independent of the method employed in determining $\{\Omega_{\mathpzc{A},n} \}$, we shall refer to it as the multi-restricted composition way of determining $\lambda_{\mathpzc{A},\kappa}$.
An elementary analysis of the recursive relation in Eq. \eqref{eq:lambda_recursive} shows that the generating function $G_{\mathpzc{A},\lambda}(q)$ for $\{\lambda_{\mathpzc{A},\kappa}\}$ must be related to $G_{\mathpzc{A},\Omega}(q)$ through the equation
				\begin{equation}
				\label{eq:gen_lambda_n}
				\boxed{
				G_{\mathpzc{A},\lambda}(q) = (1-q) \ G_{\mathpzc{A},\Omega}(q) = \sum^{2J_0+1}_{\kappa=0} \lambda_{\mathpzc{A},\kappa} \ q^\kappa } \ ,
				\end{equation}				 
which is the generating function way of determining $\lambda_{\mathpzc{A},\kappa}$. The positive coefficients of the generating function $ G_{\mathpzc{A},\lambda}(q)$ give the multiplicities $\{\lambda_{\mathpzc{A},\kappa}\}$ in Eq. \ref{eq:multiset_E}. Nevertheless, a few observations are due here: Unlike $G_{\mathpzc{A},\Omega}(q)$, $\deg \left[ G_{\mathpzc{A},\lambda}(q)\right]= 2J_0 + 1$. Moreover, exploiting the fact that $G_{\mathpzc{A},\Omega}(q)$ is reciprocal and unimodal, it can be easily proved that 
				\begin{equation}
				\lambda_{\kappa} = - \lambda_{2J_0+1-\kappa} \ ,
				\end{equation}
which -- together with the fact that $G_{\mathpzc{A},\lambda}(q=1)=0, \forall \mathpzc{A}$ -- implies that out of the $2(J_0+1)$ terms of $G_{\mathpzc{A},\lambda}(q)$, only an even number of them has nonzero coefficients. 	This number is precisely $2(m+1)$. We call those terms with zero coefficients the "sinking terms". We see that when $m=J_0$, there are no sinking terms in $G_{\mathpzc{A},\lambda}(q)$. In general, the number of sinking terms one should expect is $2J_m$.
\par Similar to Eq. \eqref{eq:generalized_dim_B(n)} of theorem \ref{thm:generalized_dim_B(n)}, the following theorem can be stated:
\begin{theorem}
Let $\mathpzc{A}$ be a multiset of angular momenta and $\mathpzc{E}_{\mathpzc{A}}$ its corresponding coupled representation multiset (see Eq. \eqref{eq:multiset_E}). Then, the multiplicity $\lambda_{\mathpzc{A},\kappa}$ of $J_\kappa \in \mathpzc{E}_{\mathpzc{A}}$ is given by the following expression:
				\begin{equation}
				\label{eq:generalized_lambda}
				\boxed{
				\lambda_{\mathpzc{A},\kappa} = \sum_{\substack{\sum^{\sigma}_{\alpha = 1} (2j_\alpha + 1)s_\alpha \leq \kappa \\ 0 \leq s_\alpha \leq N_\alpha}} (-1)^{s_1 + s_2 + \ldots + s_\sigma} 
				\binom{N + \kappa - 2 - \sum^{\sigma}_{\alpha = 1} (2j_\alpha + 1)s_\alpha}{N- 2}
				\binom{N_1 }{s_1}\ldots
				\binom{N_\sigma}{s_\sigma} }
				\end{equation}
where $0 \leq \kappa \leq m$.
\end{theorem}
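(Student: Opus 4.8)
The plan is to follow the proof of Theorem \ref{thm:generalized_dim_B(n)} almost verbatim, since the passage from $\Omega_{\mathpzc{A},n}$ to $\lambda_{\mathpzc{A},\kappa}$ amounts to nothing more than one extra factor of $(1-q)$ in the generating function. I would start from Eq. \eqref{eq:gen_lambda_n}, $G_{\mathpzc{A},\lambda}(q)=(1-q)\,G_{\mathpzc{A},\Omega}(q)$, and insert the product form $G_{\mathpzc{A},\Omega}(q)=(1-q)^{-N}\prod_\alpha(1-q^{2j_\alpha+1})^{N_\alpha}$ derived in that theorem. The prefactor $(1-q)$ simply lowers the negative exponent by one:
\[ G_{\mathpzc{A},\lambda}(q)=(1-q)^{-(N-1)}\prod_\alpha\bigl(1-q^{2j_\alpha+1}\bigr)^{N_\alpha}. \]

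Next I would expand the factor with the negative exponent as a negative binomial series, so that the coefficient of $q^{k}$ becomes $\binom{(N-1)+k-1}{(N-1)-1}=\binom{N+k-2}{N-2}$, and expand each finite factor $(1-q^{2j_\alpha+1})^{N_\alpha}$ by the ordinary binomial theorem, contributing $(-1)^{s_\alpha}\binom{N_\alpha}{s_\alpha}$ on the monomial $q^{(2j_\alpha+1)s_\alpha}$. Collecting all contributions to a fixed power $q^{\kappa}$ forces $k=\kappa-\sum_\alpha(2j_\alpha+1)s_\alpha\geq 0$, which is exactly the summation constraint $\sum_\alpha(2j_\alpha+1)s_\alpha\leq\kappa$; reading off the coefficient of $q^{\kappa}$ then yields Eq. \eqref{eq:generalized_lambda}. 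The only visible difference from Theorem \ref{thm:generalized_dim_B(n)} is the shift $N-1\to N-2$ in the lower index of the leading binomial coefficient, which is a direct echo of the single additional power of $(1-q)$.

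An equally short route bypasses the generating function and works directly from the recursion $\lambda_{\mathpzc{A},\kappa}=\Omega_{\mathpzc{A},\kappa}-\Omega_{\mathpzc{A},\kappa-1}$ (Eq. \eqref{eq:lambda_recursive}, for $\kappa\geq1$): substituting the closed form of Theorem \ref{thm:generalized_dim_B(n)} for both terms and applying Pascal's rule $\binom{a}{N-1}-\binom{a-1}{N-1}=\binom{a-1}{N-2}$ termwise, with $a=N+\kappa-1-\sum_\alpha(2j_\alpha+1)s_\alpha$, reproduces the same coefficient. Here one must reconcile the two summation domains, since the terms with $\sum_\alpha(2j_\alpha+1)s_\alpha=\kappa$ occur only in $\Omega_{\mathpzc{A},\kappa}$; but there the leading coefficient is $\binom{N-1}{N-1}=1$, matching the value $\binom{N-2}{N-2}=1$ predicted by the target formula, so the boundaries dovetail.

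I do not expect a genuine obstacle: the argument is pure bookkeeping inherited from the preceding theorem, and the single conceptual input, namely that the extra factor $(1-q)$ does all the work, is already supplied by Eq. \eqref{eq:gen_lambda_n}. The only points needing a word of care are the degenerate case $N=1$, where $(1-q)^{-(N-1)}=1$ and the multiplicities are most safely read off directly from $G_{\mathpzc{A},\lambda}(q)=1-q^{2j_1+1}$ rather than from the sum, and the verification that the stated range $0\leq\kappa\leq m$ is precisely the set of powers whose collected coefficients are the physical (nonnegative) multiplicities, the remaining powers up to $2J_0+1$ merely reproducing their negatives through the reciprocal--unimodal structure of $G_{\mathpzc{A},\Omega}(q)$.
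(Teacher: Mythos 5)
Your first route is exactly the argument the paper intends: it leaves this proof as an exercise, stating only that it ``follows the same line of reasoning'' as Theorem \ref{thm:generalized_dim_B(n)}, i.e.\ write $G_{\mathpzc{A},\lambda}(q)=(1-q)^{-(N-1)}\prod_\alpha(1-q^{2j_\alpha+1})^{N_\alpha}$ and expand, and your bookkeeping (including the $N-1\to N-2$ shift and the boundary terms in the Pascal-rule variant) is correct. Nothing further is needed.
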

The proof follows the same line of reasoning as that of theorem \ref{thm:generalized_dim_B(n)}, so we leave it as an exercise for the reader. Eq. \eqref{eq:generalized_lambda} is the generalized binomial method for determining $\lambda_{\mathpzc{A},\kappa}$. 
 
\section{Limit case: univariate spin systems.}\label{sec:univariate_spin_sys}				
A collection of $N$ spins is said to be an \emph{univariate spin system (USS)}  if $j_1 = j_2=\ldots = j_N=j$. Note that an USS may not necessarily be a system of \emph{identical} spins (IS). 
\par The limit case of univariate spin system, \emph{i.e.} when $\mathpzc{A}=\{j^N\}$, has been extensively discussed in the literature \citep{art:Mendonca-2013, art:Polychronakos-2016, art:Curtright-2017} lately, though the distinction between them and systems of identical spins has not been hitherto emphasized. We show here that the results in the above cited literature can be easily deduced from the generalized equations \eqref{eq:dim_B(n)} and \eqref{eq:generalized_lambda}. For convenience, we shall sometimes represent the integers $\{\Omega_{\mathpzc{A},n}\}$ and $\{\lambda_{\mathpzc{A},n}\}$ of $\mathpzc{A}=\{j^N\}$ (\emph{i.e.} an USS) simply as $\Omega_{j^N,n}$ and $\lambda_{j^N,n}$, respectively.
\par Beginning with the integers $\{\Omega_{\mathpzc{A},n}\}$, we see that in the case of an univariate spin system of $N$ spin-$j$, Eq. \eqref{eq:generalized_dim_B(n)} reduces to
				\begin{equation}
				\label{eq:mono_spin_Omega_n}
				\boxed{
				\Omega_{j^N,n}= \sum^{\lfloor{\frac{n}{2j+1}\rfloor}}_{s=0} (-1)^s
				\binom{
				N + n - 1 -  (2j + 1)s }{
				N- 1
				} 
				\binom{
				N }{
				s
				} }\ .
				\end{equation}								 
That is, in the case of an univariate spin system, the integers $\{\Omega_{\mathpzc{A},n}\}$ are none but the entries of an extended Pascal triangle (see \citep{art:Bollinger-1993} and compare Eq. \eqref{eq:mono_spin_Omega_n} with \citep[Eq. A.6]{art:Mendonca-2013}). 
\par \begin{proposition} In terms of generalized hypergeometric functions\cite{book:Koepf-2014}, it can be easily shown that Eq. \eqref{eq:mono_spin_Omega_n} yields
				\begin{equation}
				\label{eq:mono_spin_Omega_n_hypergeo}
				\boxed{
				\Omega_{j^N,n} = \binom{
				N+n-1}{
				n}
				 \ \tensor[_{2j+2}]{F}{_{2j+1}} \left(\left. \myatop{-N, -\frac{n}{2j+1},  \ldots ,- \frac{n-i}{2j+1}, \ldots - \frac{n-2j}{2j+1}}{-\frac{N+n-1}{2j+1}, \ldots ,- \frac{N+n-1-i'}{2j+1}, \ldots - \frac{N+n-1-2j}{2j+1}}  \right\vert 1 \right)} \ .
				\end{equation}
\end{proposition}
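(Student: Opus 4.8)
The plan is to recognize the finite sum in Eq.~\eqref{eq:mono_spin_Omega_n} as a terminating hypergeometric series via the standard term-ratio criterion: a sum $\sum_s t_s$ equals $t_0 \cdot {}_pF_q(\cdots \mid z)$ exactly when the consecutive ratio $t_{s+1}/t_s$ is a rational function of $s$ of the canonical shape $z\,\frac{(s+a_1)\cdots(s+a_p)}{(s+1)(s+b_1)\cdots(s+b_q)}$. So first I would set $t_s := (-1)^s \binom{N+n-1-(2j+1)s}{N-1}\binom{N}{s}$ and evaluate the leading term $t_0 = \binom{N+n-1}{N-1} = \binom{N+n-1}{n}$, which is precisely the prefactor appearing in Eq.~\eqref{eq:mono_spin_Omega_n_hypergeo}. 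It then suffices to show that $\sum_s t_s/t_0$ reproduces the displayed ${}_{2j+2}F_{2j+1}$.

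Next I would compute the term ratio factor by factor. The piece $\binom{N}{s+1}/\binom{N}{s} = (N-s)/(s+1)$ supplies the canonical $(s+1)$ in the denominator together with one upper parameter. The substantive piece is the ratio of the two top binomials; writing $A := N+n-1-(2j+1)s$ turns it into $\binom{A-(2j+1)}{N-1}/\binom{A}{N-1}$, which expands into a quotient of two falling products of length $2j+1$, namely $\prod_{i=0}^{2j}\bigl(n-(2j+1)s-i\bigr)$ over $\prod_{i=0}^{2j}\bigl(N+n-1-(2j+1)s-i\bigr)$. The key manipulation is to pull a factor $-(2j+1)$ out of every one of these $2(2j+1)$ linear factors, so that each numerator factor becomes $-(2j+1)\bigl(s-\tfrac{n-i}{2j+1}\bigr)$ and each denominator factor becomes $-(2j+1)\bigl(s-\tfrac{N+n-1-i}{2j+1}\bigr)$. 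The powers of $-(2j+1)$ cancel between numerator and denominator, and the two remaining signs — from $(-1)^{s+1}/(-1)^s$ and from $N-s = -(s-N)$ — cancel against each other, leaving an overall $+1$.

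Collecting the surviving factors then yields
\[
\frac{t_{s+1}}{t_s} = \frac{(s-N)\,\prod_{i=0}^{2j}\bigl(s-\tfrac{n-i}{2j+1}\bigr)}{(s+1)\,\prod_{i=0}^{2j}\bigl(s-\tfrac{N+n-1-i}{2j+1}\bigr)} ,
\]
from which I would read off the $2j+2$ upper parameters $\{-N\}\cup\{-\tfrac{n-i}{2j+1}\}_{i=0}^{2j}$, the $2j+1$ lower parameters $\{-\tfrac{N+n-1-i}{2j+1}\}_{i=0}^{2j}$, and the argument $z=1$ — exactly the data in Eq.~\eqref{eq:mono_spin_Omega_n_hypergeo}. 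Since $t_s/t_0$ and the hypergeometric summand agree at $s=0$ (both equal $1$) and have identical ratios for all $s$, they coincide termwise, which establishes the identity.

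The one point that needs genuine care — and which I regard as the main obstacle — is termination. The finite sum in Eq.~\eqref{eq:mono_spin_Omega_n} stops at $s = \lfloor n/(2j+1)\rfloor$, so I must confirm that the hypergeometric series truncates there and nowhere else. This holds because, for $i_0 := n \bmod (2j+1)$, the upper parameter $-\tfrac{n-i_0}{2j+1} = -\lfloor n/(2j+1)\rfloor$ is a non-positive integer, which annihilates every term with $s > \lfloor n/(2j+1)\rfloor$; the redundant cutoff from the upper parameter $-N$ is consistent since $\lfloor n/(2j+1)\rfloor < N$ for all admissible $n \le 2Nj$. I would also flag that some lower parameters can themselves be non-positive integers, so one must verify that the numerator's truncation occurs at or before any denominator zero, ensuring the terminating series is well defined and genuinely equals the manifestly finite sum. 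Once this bookkeeping is settled, the proposition follows.
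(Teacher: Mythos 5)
Your proposal is correct and takes essentially the same route as the paper's own proof: compute $t_0=\binom{N+n-1}{n}$ and read the parameters of $\tensor[_{2j+2}]{F}{_{2j+1}}$ off the term ratio $t_{s+1}/t_s=\frac{(s-N)\prod_{i=0}^{2j}\left(s-\frac{n-i}{2j+1}\right)}{(s+1)\prod_{i'=0}^{2j}\left(s-\frac{N+n-1-i'}{2j+1}\right)}$, which is exactly the ratio the paper exhibits. The only addition on your side is the explicit check that the series terminates at $s=\lfloor n/(2j+1)\rfloor$ via the integer upper parameter $-\lfloor n/(2j+1)\rfloor$, a point the paper leaves implicit.
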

\begin{proof}
First, we note that we can rewrite the sum in Eq. \eqref{eq:mono_spin_Omega_n} as $\sum^{\infty}_{s=-\infty} t_s$. Moreover, the term ratio $t_{s+1}/t_s$ is
				\begin{equation}
				\frac{t_{s+1}}{t_s} = \frac{\prod^{2j}_{i=0} \ \left( s - \frac{n-i}{2j+1} \right)}{\prod^{2j}_{i'=0} \ \left( s - \frac{N+n-1-i'}{2j+1} \right)}\frac{s-N}{s+1} \ ,
				\end{equation}
and 
				\begin{equation}
				t_0 = \binom{
				N + n - 1 }{
				n
				} \ ,
				\end{equation}
from which we infer Eq. \eqref{eq:mono_spin_Omega_n_hypergeo}.
\end{proof}
We draw the reader's attention to the fact that both the number of upper and lower parameters of $\tensor[_{2j+2}]{F}{_{2j+1}}$ can be reduced by one each any time the following condition is satisfied:
				\begin{equation}
				\label{eq:hypergeo_reduction_Omega_n}
				i' -i = N-1 \ .
				\end{equation}
Those pair of upper and lower parameters characterized by $i$ and $i'$ satisfying Eq. \eqref{eq:hypergeo_reduction_Omega_n} do not contribute to the generalized hypergeometric function in Eq. \eqref{eq:mono_spin_Omega_n_hypergeo}. Thus, in general, one may make the transformation $\tensor[_{2j+2}]{F}{_{2j+1}} \to \tensor[_{2j+2 - \epsilon}]{F}{_{2j+1-\epsilon}}$, where
				\begin{equation}
				\epsilon= \dim R_{j,N,1} \ , \ \mbox{where } R_{j,N,1} := \left\lbrace \left. (i',i) \right| i' - i = N-1 \ \wedge \ 0 \ \leq i,i' \in \mathbb{N}_0 \leq 2j \right\rbrace
				\end{equation}
and where $\tensor[_{2j+2 - \epsilon}]{F}{_{2j+1-\epsilon}}$ excludes all indexed pair of upper and lower parameters $(i',i) $ of $\tensor[_{2j+2}]{F}{_{2j+1}}$ which are elements of $R_{j,N,1}$. Obviously, when $2j < N- 1$ neither the number of upper nor lower parameters in Eq. \eqref{eq:mono_spin_Omega_n_hypergeo} can be reduced.
\par When it comes to the multiplicities $\{\lambda_{\mathpzc{A},n}\}$, we deduce from Eq. \eqref{eq:generalized_lambda} that for univariate spin systems
				\begin{equation}
				\label{eq:mono_spin_lambda_k}
				\boxed{
				\lambda_{j^N,\kappa}= \sum^{\lfloor{\frac{\kappa}{2j+1}\rfloor}}_{s=0} (-1)^s
				\binom{
				N + \kappa - 2 -  (2j + 1)s }{
				N- 2
				} 
				\binom{
				N }{
				s
				} }\ .
				\end{equation}							
This result is identically the same as Polychronakos and Sfetsos'\cite{art:Polychronakos-2016} (who arrived at their result employing common composition rules of $ SU(2) $ representations), after one performs the transformation according to Eq. \eqref{eq:M_to_n}. 
\begin{proposition}
The summation in Eq. \eqref{eq:mono_spin_lambda_k} yields
				\begin{equation}
				\label{eq:mono_spin_lambda_n_hypergeo}
				\boxed{
				\lambda_{j^N,\kappa} = \binom{
				N+\kappa-2 }{
				\kappa
				} \ \tensor[_{2j+2}]{F}{_{2j+1}} \left(\left. \myatop{-N, -\frac{\kappa}{2j+1},  \ldots ,- \frac{\kappa-i}{2j+1}, \ldots - \frac{\kappa-2j}{2j+1}}{-\frac{N+\kappa-2}{2j+1}, \ldots ,- \frac{N+\kappa-2-i'}{2j+1}, \ldots - \frac{N+\kappa-2-2j}{2j+1}}  \right\vert 1 \right)} \ .
				\end{equation}
\end{proposition}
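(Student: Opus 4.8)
The plan is to mirror the proof of the preceding proposition for $\Omega_{j^N,n}$ essentially verbatim, since Eq.~\eqref{eq:mono_spin_lambda_k} has exactly the same shape as Eq.~\eqref{eq:mono_spin_Omega_n}, only with the replacements $n \to \kappa$ and $N-1 \to N-2$ inside the two binomial factors. First I would extend the terminating sum in Eq.~\eqref{eq:mono_spin_lambda_k} to a bilateral series $\sum_{s=-\infty}^{\infty} t_s$ with
\[
t_s := (-1)^s \binom{N+\kappa-2-(2j+1)s}{N-2}\binom{N}{s},
\]
which is legitimate because $\binom{N}{s}$ kills the terms with $s<0$ and $s>N$, while the first binomial annihilates the terms with $\lfloor \kappa/(2j+1)\rfloor < s \le N$. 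This recasts the finite sum as a single hypergeometric-type series to which the term-ratio (Gauss) test can be applied.

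Next I would compute the term ratio. The factor $\binom{N}{s}$ contributes $\binom{N}{s+1}/\binom{N}{s} = (N-s)/(s+1)$, while shifting $s \to s+1$ in the other binomial lowers its upper argument by $2j+1$, giving
\[
\frac{\binom{N+\kappa-2-(2j+1)(s+1)}{N-2}}{\binom{N+\kappa-2-(2j+1)s}{N-2}} = \frac{\prod_{i=0}^{2j}\bigl(\kappa-(2j+1)s-i\bigr)}{\prod_{i=0}^{2j}\bigl(N+\kappa-2-(2j+1)s-i\bigr)}.
\]
Pulling a factor $-(2j+1)$ out of each of the $2j+1$ linear terms in numerator and denominator (these cancel), and absorbing the remaining signs, I obtain
\[
\frac{t_{s+1}}{t_s} = \frac{(s-N)\,\prod_{i=0}^{2j}\bigl(s-\tfrac{\kappa-i}{2j+1}\bigr)}{(s+1)\,\prod_{i=0}^{2j}\bigl(s-\tfrac{N+\kappa-2-i}{2j+1}\bigr)}.
\]

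Reading this against the canonical ratio $u_{s+1}/u_s = z\,(s+1)^{-1}\prod_a (s+a)/\prod_b(s+b)$ identifies the $2j+2$ upper parameters as $-N$ together with $-\tfrac{\kappa-i}{2j+1}$ for $i=0,\dots,2j$, the $2j+1$ lower parameters as $-\tfrac{N+\kappa-2-i}{2j+1}$ for $i=0,\dots,2j$, and the argument as $z=1$. Since the leading term is $t_0 = \binom{N+\kappa-2}{N-2} = \binom{N+\kappa-2}{\kappa}$, factoring $t_0$ out of the series reproduces exactly Eq.~\eqref{eq:mono_spin_lambda_n_hypergeo}, i.e.\ a $\tensor[_{2j+2}]{F}{_{2j+1}}(\cdots\vert 1)$ prefixed by $\binom{N+\kappa-2}{\kappa}$.

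I expect the only genuinely delicate point to be the justification of the bilateral/terminating extension: one must confirm that no spurious terms survive in the range $\lfloor\kappa/(2j+1)\rfloor < s \le N$, which is where the negative-integer upper parameter $-N$ (forcing termination at $s=N$) and the vanishing of the integer binomial $\binom{N+\kappa-2-(2j+1)s}{N-2}$ must be reconciled with the $s\ge 0$ summation convention of $\tensor[_{2j+2}]{F}{_{2j+1}}$. The remaining work -- the factorization of the binomial ratio and the tracking of the three sign contributions (the explicit $(-1)$ and the $-(2j+1)$ pulled from each product) -- is purely mechanical, exactly as in the proof of Eq.~\eqref{eq:mono_spin_Omega_n_hypergeo}, which is why the authors are content to leave it as an exercise.
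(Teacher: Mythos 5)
Your proposal is correct and follows exactly the route the paper intends: it replicates, with $n\to\kappa$ and $N-1\to N-2$, the term-ratio argument used for Eq.~\eqref{eq:mono_spin_Omega_n_hypergeo} (bilateral extension, computation of $t_{s+1}/t_s$, identification of the $2j+2$ upper and $2j+1$ lower parameters, and normalization by $t_0=\binom{N+\kappa-2}{\kappa}$), which is precisely the "exercise for the reader" the authors have in mind. The term ratio and the termination of the series at $s=\lfloor\kappa/(2j+1)\rfloor$ are both handled correctly.
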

We leave the proof to the reader. Just as in the case of $\Omega_{j^N,\kappa}$, Eq. \eqref{eq:mono_spin_Omega_n_hypergeo}, the hypergeometric function in Eq. \eqref{eq:mono_spin_lambda_n_hypergeo} can be reduced to  $\tensor[_{2j+2-\epsilon'}]{F}{_{2j+1-\epsilon'}}$ where
				\begin{equation}
				\epsilon'= \dim R_{j,N,2} \ , \ \mbox{where } R_{j,N,2} := \left\lbrace \left. (i',i) \right| i' - i = N-2 \ \wedge  \ 0 \ \leq i,i' \in \mathbb{N}_0 \leq 2j \right\rbrace \ ,
				\end{equation}
on the condition that $2j \geq N-2$.

\par We mention that the generating function $G_{\mathpzc{A}, \Omega}(q)$ for an univariate spin system is similar to that provided in \citep{art:Polychronakos-2016} -- the difference being a constant of proportionality which depends exponentially on $J_0$. In particular, in \citep{art:Polychronakos-2016} the generating function was arrived at by considering the (statistical mechanical) partition function $\mathpzc{Z}_{\mathpzc{A}}$ of a system of noninteracting identical spins coupled with a weak magnetic field of intensity $B$ along the $z-$axis. Assuming the kinetic energy of each spin is negligible compared to the interaction energy with the static magnetic field, then the Hamiltonian of the system is thus $H=-\mu g B\sum_ij_{i,z}$, where $\mu$ and $g$ are the appropriate magneton and $g-$ factor, respectively. 
\par Indeed, if we let $q \to e^{-\beta'}$, then in general,
				\begin{equation}
				\label{eq:Partition_function_G_A}
				\mathpzc{Z}_{\mathpzc{A}} = e^{-\beta'_0} \ G_{\mathpzc{A}, \Omega}\left(e^{-\beta'}\right) 
				\end{equation}
for any IS system interacting with a magnetic field as described above, where $\beta' := \beta \epsilon$, and $\beta'_0 := \beta E_0$; $\epsilon$ is the absolute value of the constant quanta of energy involved in the transition between two consecutive energy levels, and $E_0$ is the ground state energy of the system. In the case of a noninteracting univariate spin system with Hamiltonian $H$ given above, $\epsilon= \mu \lvert g \rvert B$ and $E_0 =-\epsilon J_0$.
\par It is worthwhile to note that Eq. \eqref{eq:Partition_function_G_A} shows that for a noninteracting IS system interacting with a magnetic field under the conditions outlined above, the ground state configuration of the system basically determines its thermodynamics at low temperatures. At high temperatures $(\beta \epsilon \to 0)$, all configurations equally contribute to the thermodynamics of the system and $\mathpzc{Z}_{\mathpzc{A}} \to (2j+1)^N$. If we consider for example an IS ferromagnetic or antiferromagnetic system under the mean field approach, we observe that Eq. \eqref{eq:Partition_function_G_A} still holds, \emph{i.e.} the ground state configuration dominates the partition function. We shall come back to this point in \S \ref{subsec:exclusion_processes}.
\par As shown in \cite{art:Polychronakos-2016}, the CGD of an IS system can be analytically derived considering the partition function $\mathpzc{Z}_{\mathpzc{A}}$ (multiplied by $(1-e^{-\beta'})$) of the system in the presence of a weak static magnetic field. In particular, for such systems $\mathpzc{Z}_{\mathpzc{A}}$ is the generating function for $\{\Omega_n\}$. Given these observations, it might be tempting to conjecture that the CGD of any collection of spins may be determined in a similar fashion. Unfortunately,  that is not the case. This is because for an arbitrary collection of noninteracting spins in the presence of a magnetic field,  $H= -B\sum_i \mu_i g_i j_{i,z}$. Therefore, 
				\begin{equation}
				\label{eq:Z_A_general_case}
				\mathpzc{Z}_{\mathpzc{A}} = e^{-\sum_\alpha \beta'_{0,\alpha}} \prod_\alpha \left(\sum^{2j_\alpha}_{n_\alpha=0} e^{-\beta'_\alpha n_\alpha}\right)^{N_\alpha}
				\end{equation}
where $\alpha$ runs over different spins, $\beta'_{\alpha} := \beta \epsilon_\alpha=\beta \mu_\alpha \lvert g_\alpha \rvert B$, $\beta'_{0,\alpha} := \beta E_{0,\alpha}$, with $E_{0,\alpha} := -\epsilon_\alpha N_\alpha j_\alpha$. Given that in general $\beta'_\alpha \neq \beta'_{\alpha'}$ for $\alpha \neq \alpha'$, there is no way the partition function $\mathpzc{Z}_{\mathpzc{A}}$ in Eq. \eqref{eq:Z_A_general_case} will give the same coefficients $\{ \Omega_n \}$ as $G_{\mathpzc{A},\Omega}(q)$ (see Eq. \eqref{eq:gen_Omega_n_c}), except perhaps one \emph{assumes} $\beta'_\alpha = \beta'_{\alpha'}$ for $\alpha \neq \alpha'$.
The point is: the basic requirement for the partition function to be a good generating function for $\{\Omega_n\}$, thus the precursor of the generating function for the CGD spin multiplicities, is that the quanta energy $\epsilon$ remains invariant for all the spins of the collection. But such an invariance is in general not guaranteed by the physics of the problem. Thus, $\mathpzc{Z}_{\mathpzc{A}}$ is in general a good generating function for $\{\Omega_n\}$ only in the case of a system of identical spins whereas $G_{\mathpzc{A},\Omega}$ knows not such a limitation -- a reminder of the fact that the problem of analytically determining the CGD of an arbitrary collection of spins is an enumerative combinatoric one.
				
\section{Applications and further considerations}\label{sec:applications}
\subsection{A simple illustration}\label{subsec:application_illustration}
Let us consider a spin system of four spin-$1$ and two spin-$1/2$. Thus, $\mathpzc{A} = \left\lbrace\frac{1}{2} , \frac{1}{2} , 1 ,1 ,1 ,1\right\rbrace$. This could be, for instance, the case of a monoradical ion with $6$ nonzero spin nuclei: four of spin-1 and two of spin-$1/2$. The addition of these six angular momenta will yield a series of angular momenta (\emph{i.e.} the Clebsch-Gordan series), which will constitute the set $\mathpzc{E}_{\mathpzc{A}}$ (see Eq. \eqref{eq:multiset_E}). The maximum of $\mathpzc{E}_{\mathpzc{A}}$, $J_0$, according to Eq. \eqref{eq:J_0} is $J_0 = 5$. In this particular case, $\nu_{\mathpzc{A}}=-3$ and so the minimum of the Clebsch-Gordan series (\emph{i.e.} of $\mathpzc{E}_{\mathpzc{A}}$) is $J_m=0$ (see Eqs. \eqref{eq:J_m} and \eqref{eq:upislon_A}). So, the distinct spin angular momenta we get from the addition of the six initial spin angular momenta are 
					\begin{equation}
					J_0=5, \quad J_1=4,\quad  J_2=3, \quad J_3=2,\quad  J_4=1,\quad  J_5=0 \ .
					\end{equation}
We now determine the multiplicity of the various distinct $J_\kappa$ employing the three methods described above.

\subsubsection{The multi-restricted composition method.}
\par The multiplicity of the total angular momenta $\{J_\kappa\}$ can be determined by first determining the first six values of $\{\Omega_n\}$ (see Eqs. \eqref{eq:dim_B(n)} and \eqref{eq:gen_Omega_n}). Employing Eq. \eqref{eq:dim_B(n)}, we may calculate $\Omega_0, \Omega_1 , \ldots , \Omega_5$. For the sake of brevity, we illustrate here the calculation of $\Omega_4$ and $\lambda_4$. First of all, we need to determine the set $P(\mathpzc{A};n=4)$. In other words, we need to determine all the possible ways of writing the integer $4$ as the sum of at most six integers, with the restriction that no more than four parts can be greater than $2$, and $0$ is an admissible part. With this prescription, we find that
					\begin{equation}
					\begin{split}
					P(\mathpzc{A};n=4) = \{A(4)\} & = \{ (1,1,1,1,0,0), (2,1,1,0,0,0), (2,2,0,0,0,0) \} \\
					& = \{ (1,1,1,1), (2,1,1), (2,2) \} \ .
					\end{split}
					\end{equation}
Let us call $(1,1,1,1) := A_x(4)$, $(2,1,1) := A_y(4)$ and $(2,2):= A_z(4)$. Then, $\widetilde{A}_x(4) = \{1\}, \ \widetilde{A}_y(4) = \{1,2\}$ and $\widetilde{A}_z(4) = \{2\}$ and so from Eq. \eqref{eq:dim_B(n)} it follows that
					\begin{equation}
					\label{eq:Omega_4-a}
					\Omega_4 = \binom{
					6 }{
					4
					} + \binom{
					4 }{
					1
					} \binom{
					5 }{
					2
					} + \binom{
					4 }{
					2
					}  = 61
					\end{equation}
where the first, second and third terms are the contributions from $A_x(4), A_y(4)$ and $A_z(4)$, respectively. 
\par According to Eq. \eqref{eq:lambda_recursive}, in order to determine $\lambda_4$ we need $\Omega_3$. For $n=3$ here, 
					\begin{equation}
					P(\mathpzc{A};n=3)  = \{ (1,1,1), (1,2) \} \ .
					\end{equation}
And so from Eq. \eqref{eq:dim_B(n)} we find that
					\begin{equation}
					\Omega_3 = \binom{
					6 }{
					3
					} + \binom{
					4 }{
					1
					} \binom{
					5 }{
					1
					} = 40 \ ,
					\end{equation}
which means $\lambda_4=21$, according to Eq. \eqref{eq:lambda_recursive}. 

\subsubsection{The generalized binomial approach.}
For the same reason as above, we only illustrate here how to calculate $\Omega_4$ and $\lambda_4$ using the generalized binomial method.
\par With $\mathpzc{A}=\{\frac{1}{2}^2, 1^4\}$, we only have two distinct momenta. Let us associate the dummy variable $s_1$ with $j_\alpha = 1/2$ and $s_2$ with $j_\alpha = 1$. Then, from Eq. \eqref{eq:generalized_dim_B(n)} it follows that
					\begin{equation}
					\Omega_4 = \sum_{\substack{2s_1 + 3s_2 \leq 4 \\ 0 \leq s_1 \leq 2 , \ 0  \leq s_2 \leq 4 }} (-1)^{s_1 + s_2} \binom{
				9 - 2s_1-3s_2 }{
				5
				} 
				\binom{
				2 }{
				s_1
				} 
				\binom{
				4 }{
				s_2
				} \ .
					\end{equation}
The only pairs of $s_1$ and $s_2$ which satisfy the condition $2s_1 + 3s_2 \leq 4$ are $(s_1,s_2) \in \{(0,0), (1,0), (2,0), (0,1) \}$, which correspond to four summands:
				\begin{equation}
				\Omega_4 = \binom{
				9 }{
				5
				} - \binom{
				7 }{
				5
				} \binom{
				2 }{
				1
				} + \binom{
				5 }{
				5
				}\binom{
				2 }{
				2
				} - \binom{
				6 }{
				5
				}\binom{
				4 }{
				1
				} = 61 \ .
				\end{equation}
Similarly, in accord with Eq. \eqref{eq:generalized_lambda} we find that
					\begin{equation}
					\begin{split}
					\lambda_4 & = \sum_{\substack{2s_1 + 3s_2 \leq 4 \\ 0 \leq s_1 \leq 2 , \ 0  \leq s_2 \leq 4 }} (-1)^{s_1 + s_2} \binom{
				8 - 2s_1-3s_2 }{
				4
				}
				\binom{
				2 }{
				s_1
				} 
				\binom{
				4 }{
				s_2
				} \ , \\
				& = \binom{
				8 }{
				4
				} -
				\binom{
				6 }{
				4
				} 
				\binom{
				2 }{
				1
				} +
				\binom{
				4 }{
				4
				} 
				\binom{
				2 }{
				2
				} - 
				\binom{
				5 }{
				4
				}
				\binom{
				4 }{
				1
				}	 = 21 \ .							
				\end{split}
					\end{equation}
\subsubsection{The generating function method.} 
\par The numbers $\Omega_n$ are easily determined using the generating function given in Eq. \eqref{eq:gen_Omega_n}. In this case, we have
					\begin{equation}
					G_{\mathpzc{A},\Omega}(q)=(1+q)^2 (1+q+q^2)^4  = \sum^{2J_0}_{n=0} \ \Omega_n \ q^n  \ .
					\end{equation}
But, 
					\begin{equation}
					\label{eq:gen_6}
					(1+q)^2 (1+q+q^2)^4 = q^{10} + 6q^9 + 19q^8+ 40q^7 + 61q^6 + 70q^5 + 61q^4 + 40q^3 + 19q^2 + 6q + 1 \ .
					\end{equation}
From Eq. \eqref{eq:gen_6} we see that $\Omega_4=61$, as calculated earlier according to both the multi-restricted composition and generalized binomial methods.

\par We are now ready to calculate the multiplicity of $J_0, J_1, \ldots , J_m$. From Eqs. \eqref{eq:lambda_recursive} and \eqref{eq:gen_6} we have that
					\begin{equation}
					\lambda_0 = 1 \ , \ \lambda_1 = 5 \ , \ \lambda_2 = 13 \ , \ \lambda_3 = 21 \ , \ \lambda_4 = 21 \ , \ \lambda_{m=5} = 9 . 
					\end{equation}
The same result can be obtained if we make use of the generating function $G_{\mathpzc{A},\lambda}(q)$ in Eq. \eqref{eq:gen_lambda_n}. Indeed, in this case,
					\begin{equation}
					\label{eq:illustration_gen_lambda}
					G_{\mathpzc{A},\lambda}(q) = 1+ 5q + 13q^2+ 21q^3 +21q^4+ 9q^5 - 9q^6 - 21q^7- 21q^8- 13q^9- 5q^{10} - q^{11} \ ,
					\end{equation}
from which we observe that $\lambda_4=21$ as previously determined. Note the absence of sinking terms in Eq. \eqref{eq:illustration_gen_lambda}: in fact, $m=J_0=5$.
\par Adopting a notation similar to that in \cite{art:Zachos-1992}, we may write the Clebsch-Gordan decomposition series from the coupling of $2$ spin-$1/2$ and $4$ spin-$1$ angular momenta as 
					\begin{equation}
					\mathbf{2}^{\otimes 2} \otimes \mathbf{3}^{\otimes 4} = \mathbf{11} \oplus 5 \cdot \mathbf{9} \oplus 13 \cdot \mathbf{7} \oplus 21 \cdot \mathbf{5}\oplus 21 \cdot \mathbf{3} \oplus 9 \cdot \mathbf{1} \ ,
					\end{equation}
where the integers in boldface represent the dimension of a representation; the integers multiplying the boldface integers on the right-hand side are the multiplicities.
The Hilbert space $\mathcal{H}$ of the whole system is of dimension $324$. With the above calculated multiplicities, one can verify that
					\begin{equation}
					\sum^m_{\kappa = 0} \lambda_\kappa \ (2J_\kappa + 1) = 324 \ ,
					\end{equation}					 
as should be expected.
\par In passing, we note that term symbols in atomic physics can be readily determined in similar fashion.
\subsection{Isotropic tensors, Riordan numbers, Catalan numbers and lattice paths}\label{subsec:counting_numbers}
A rank $N$ tensor in a $D$ dimensional Euclidean space, $T_{D,N}$, has the same irreducible representations as the composition of $N$ spin-$\left(\frac{D-1}{2}\right)$ representations in $SU(2)$. The dimension of the basis set of its irreducible components are given by the (positive) coefficients of $G_{\mathpzc{A},\lambda}(q)$; here, $\mathpzc{A}=\{(\frac{D-1}{2})^N\}$. For example, the irreducible components of a rank $10$ tensor in $3-$dimensional space have the same dimensions as the irreducible components of the coupling of $10$ spin-$1$ particles:
					\begin{multline}
					\mathbf{3}^{\otimes 10} = 603 \cdot \mathbf{1} \oplus 1585 \cdot \mathbf{3} \oplus 2025 \cdot \mathbf{5} \oplus 1890 \cdot \mathbf{7} \oplus 1398 \cdot \mathbf{9}\oplus 837 \cdot \mathbf{11}  \\ 
					\oplus 405 \cdot \mathbf{13} \oplus 155 \cdot \mathbf{15} \oplus 45 \cdot \mathbf{17} \oplus 9  \cdot \mathbf{19} \oplus \mathbf{21}  \ .
					\end{multline}
The irreducible component $\mathbf{1}$ is the $S$ state, \emph{i.e.} the totally symmetric part of the tensor. The multiplicities of the various representations are also the characters of the rotation group in that representation; which means that the totally symmetric representation of a rank $10$ tensor in $D=3$ has a basis set of dimension $603$.  These are also the number of independent isotropic tensor isomers present in $T_{3,10}$. 
\par Isotropic tensors are crucial when determining rotational averages of observables, and they are particularly useful in essentially all studies related to matter-radiation interaction\cite{art:Andrews-1977}. To correctly perform rotational averages, it is of fundamental importance to know beforehand how many linearly independent isotropic isomers there are. As we have shown above, one can use CGD to determine this number by making use of the relations derived above. We also mention that the number of independent isotropic isomers in $D=3$ of rank $n=0,1,2,\ldots$ are collectively called Motzkin sum \cite{book:Paolucci-2016,*link:Weisstein} or Riordan\cite{art:Bernhart-1999} numbers and their generating function is\cite{link:Weisstein}
					\begin{equation}
					\begin{split}
					G_{3D}(q)& =\frac{1}{2q}\left(1-\sqrt{\frac{1-3q}{1+q}} \right)\\
					& = 1 + q^2 + q^3 + 3q^4 + 6q^5 + 15q^6 + 36q^7 + 91q^8 + 232q^9 + \ldots 
					\end{split}
					\end{equation}					  
Following similar arguments, a rank $N$ tensor in $D=2$ has the same irreducible representations as a multispinor of rank $N$. We thus understand from Eq. \eqref{eq:J_m} that there will be no isotropic tensors when $N$ is odd since in that case $J_m=1/2$. In fact, any tensor of rank odd (even) $N$ in $D=2$ is fermionic (bosonic). In particular, the number of independent isotropic tensors in $T_{2,N=2n}, \ n \in \mathbb{N}_0$ is given by $\lambda_{(1/2)^N,N/2}$, \emph{i.e.} the spin multiplicity  of the $N/2-$th distinct element of the associated multiset $\mathpzc{E}_{(1/2)^N}$ obtained after coupling $N$ spin-(1/2)s. We thus infer from Eqs. \eqref{eq:dim_B(n)} or \eqref{eq:spin_1/2_Omega_limit} and \eqref{eq:lambda_recursive} that
					\begin{equation}
					\begin{split}
					\lambda_{(1/2)^N,N/2} & = \binom{
					N }{
					N/2
					} - \binom{
					N }{
					N/2 - 1
					} = \frac{1}{N/2+1} \binom{
					N }{
					N/2
					} \\
					& := C_{N/2}
					\end{split}
					\end{equation}
where we have recognized the integers $\{\lambda_{(1/2)^N,N/2}\}$ as being the well-known Catalan numbers, $\{C_{N/2}\}$. More appropriately, the $\{C_{N/2}\}$ are the so-called aerated Catalan numbers\cite{art:Donaghey-1977} since $C_{N/2}=\lambda_{(1/2)^N,N/2}=0$ for odd $N$. The generating function $G_{2D}(q)$ for the integers $\{\lambda_{(1/2)^N,N/2}\}$ is of the form\cite{art:Donaghey-1977}
					\begin{equation}
					\begin{split}
					G_{2D}(q)& =\frac{1}{2q^2}\left(1-\sqrt{1-4q^2} \right)\\
					& = \sum^{\infty}_{N=0} \ \lambda_{(1/2)^N,N/2} \ q^{N} \\
					& = 1 + q^2 + 2 q^4 + 5 q^6 + 14 q^{8} + 42 q^{10} + 132 q^{12} + \ldots 
					\end{split}
					\end{equation}
where we notice the appearance of only even terms (corresponding to bosonic states). 
From Eqs. \eqref{eq:lambda_recursive} and \eqref{eq:spin_1/2_Omega_limit} we have that
					\begin{equation}
					\lambda_{(1/2)^N,\kappa} = \frac{N+1-2\kappa}{N+1-\kappa} \binom{
					N }{
					\kappa
					} \ , \ \qquad \kappa \in \{0,1,\ldots , m=N/2 \} \ . 
					\end{equation}
Drawing on the results obtained with the generalized binomial method (Eq. \eqref{eq:mono_spin_lambda_n_hypergeo}), we end up with the following identities:
					\begin{subequations}
					\begin{align}
					 \lambda_{(1/2)^N,\kappa} & = 
				 	\binom{
					N+\kappa-2 }{
					\kappa
					} \ \tensor[_{3}]{F}{_{2}} \left(\left. \myatop{-N, -\frac{\kappa}{2},  - \frac{\kappa-1}{2}}{-\frac{N+\kappa-2}{2}, -\frac{N+\kappa-3}{2}}  \right\vert 1 \right) \\
					C_\nu & = \binom{
					3\nu-2 }{
					\nu
					} \ \tensor[_{3}]{F}{_{2}} \left(\left. \myatop{-2\nu, -\frac{\nu}{2},  -\frac{\nu-1}{2}}{-\frac{3\nu-2}{2}, -\frac{3\nu-3}{2}}  \right\vert 1 \right) \\ 
					R_\nu & = \binom{
					2\nu-2 }{
					\nu
					} \ \tensor[_{4}]{F}{_{3}} \left(\left. \myatop{-\nu, -\frac{\nu}{3},  -\frac{\nu-1}{3}, - \frac{\nu-2}{3}}{-\frac{2\nu-2}{3},-\frac{2\nu-3}{3}, -\frac{2\nu-4}{3}}  \right\vert 1 \right) \ .
					\end{align}
					\end{subequations}			
where $\nu \in \{0, 1, 2,3 , \ldots\}$, and $\{C_\nu\}$ and $\{R_\nu\}$ are the Catalan and Riordan numbers, respectively. The easiness with which these identities are derived from Eq. \eqref{eq:mono_spin_lambda_n_hypergeo} is worth noting. 
\par Moreover, the popping up of Catalan and Riordan numbers in these limit cases is very telling. They strike at a deeper connection between enumerative combinatorics and CGD. In particular, given that the Catalan and Riordan numbers are related to some counting problems in lattice paths (or random walks), it is only fair to ask if every CGD problem can be re-interpreted as a lattice path problem. We call this the "CGD-lattice path duality" problem. A quick analysis of the issue seems to weigh in favor of the affirmative, but a more formal prove need to be given. For example, a very important characteristic of the lattice paths counted by Catalan triangle is that  at each step one can move up $(\Delta =+1)$ or down $(\Delta = -1)$, while with paths counted by the Motzkin numbers one can stay put $(\Delta = 0)$ besides the up and down moves\citep{art:Donaghey-1977}. These restricted variations of one's position $(\Delta = 0, \pm 1)$ are reminiscent of selection rules for spin transitions. And the steps may be interpreted as multiples of the transition timescale.  So far, the random walk connection has been discussed in the literature only in the case of $ N $ spin-$ j $s \citep{art:Polychronakos-2016} (see also \citep[sec. 3.2]{art:Curtright-2017}). Extending this interpretation to the generalized multivariate $\mathpzc{A}$ may set the ground for a formal way of mapping lattice gas models to spin dynamics, which might prove to be computationally advantageous (especially in EPR and NMR simulations). Movassagh and Shor's\citep{art:Movassagh-2014} recent application of lattice paths, among other mathematical techniques, to prove the violation of the area law in some $D=1$ models is particularly interesting and could offer some insights as to how to achieve the lattice gas-spin dynamics mapping in general.
\subsection{CGD and symmetric exclusion process on graphs.}\label{subsec:exclusion_processes}
An even more interesting application of our results has to do with exclusion processes (see \citep{art:Mendonca-2013} and references therein). In \cite{art:Mendonca-2013}, Mendonça considering a simple symmetric exclusion process (SSEP) on a complete graph (with characteristics defined in the article), showed that its infinitesimal generator $\mathcal{H}$ is equivalent to the Hamiltonian of the isotropic Heisenberg spin-$1/2$ quantum ferromagnet on the complete graph. The conservation of particles in the process implied that $\mathcal{H}$ can be blocked-diagonalized, with each invariant subspace conserving $n$ particles. The peculiar characteristic of the SSEP model is that each vertex of the complete graph can accommodate not more than a particle at a time. We can thus assign to each vertex the number $0$ (when it is empty) or $1$ (when occupied). Here, these values are literally counting the number of particles in that vertex, and the fact that we can associate or $0$ or $1$ to each vertex is an indication that each vertex in this model is acting like a spin-1/2 in $SU(2)$. The degeneracy of the eigenstate characterized by a total number of particles $n$ is simply $\Omega_{(1/2)^ N,n}$ -- which makes perfect sense if we recall how the expression for $\Omega_{\mathpzc{A},n}$ was derived under the multi-restricted composition method in Sec. \ref{sec:momentum_inv_sub_dim}. 
\par Our results can be used to analyze the degeneracy of more complicated scenarios. One may consider for example a complete graph of $N$ vertices, whereby the vertex $i$ can accommodate at most a given finite number $n_i$ of particles (or assume a finite number of states which can be ordered). Say we represent the maximal occupation of the vertices by the multiset $\{n_1^{d_1}, \ldots , n_\sigma^{d_\sigma}\}$, where $d_\alpha$ is the number of vertices which can accommodate at most $n_\alpha$ particles and $\sum^\sigma_{\alpha=1} d_\alpha = N$. Then the degeneracy of the invariant subspace characterized by $n$ particles is $\Omega_{\mathpzc{A},n}$ (Eq. \eqref{eq:generalized_dim_B(n)}), where $\mathpzc{A}=\{(\frac{n_1}{2})^{d_1}, \ldots , (\frac{n_\sigma}{2})^{d_\sigma}\}$. Here, we should expect the generator of the process to be equivalent to the Hamiltonian of a mixed quantum spin model. 
\par In the special case whereby each vertex can be occupied by any number of particles (known in the literature as the "zero range process"), we lose exclusion and $\mathpzc{A}=\{\infty^N\}$ (\emph{i.e.} the process becomes equivalent to the composition of $N$ spin-$\infty$). We thus end up with vertices which act like bosonic eigenstates. And the generating function for $\Omega_{\infty^N, n}$, $G_{\infty^N, \Omega}(q)$, is
						\begin{equation}
						\label{eq:gen_infty_Omega}
					G_{\infty^N, \Omega}(q)=(1 + q + q^2 + q^3 + \ldots )^N = \sum^{\infty}_{n=0}  \ \Omega_{\infty^N, n} \ q^n \ .
					\end{equation}
But $(1 + q + q^2 + q^3 + \ldots )^N= (1-q)^{-N}$, from which we derive that
					\begin{equation}
					\label{eq:mono_Omega_infinity}
					\Omega_{\infty^N, n} = \binom{
					N + n - 1 }{
					n
					}
					\end{equation}
and,
					\begin{equation}
					\label{eq:mono_lambda_infinity}
					\lambda_{\infty^N, \kappa} = \binom{
					N + \kappa - 2 }{
					\kappa
					} \ .
					\end{equation}
$\lambda_{\infty^N, n}$ is therefore the number of ways one can select $\kappa$ objects from $N-1$ distinct objects, with no constraint on the number of times an object may appear.   Comparing Eqs. \eqref{eq:mono_Omega_infinity} and \eqref{eq:mono_lambda_infinity} with Eqs. \eqref{eq:mono_spin_Omega_n_hypergeo} and \eqref{eq:mono_spin_lambda_n_hypergeo}, respectively, we see that both hypergeometric functions in the latter pair satisfy the limit: $\lim_{j \to \infty} \ \tensor[_{2j+2}]{F}{_{2j+1}}\left(\left. \myatop{\ldots}{\ldots} \right| 1 \right) = 1 $. 
\par The notion of spin-$\infty$ limit might seem too exotic to embrace, let alone the magnetism of such systems. But we recall that magnetism in the limit of spin-$\infty$ under the isotropic Heisenberg model has been discussed by Fisher\cite{art:Fisher-1964}, who -- upon deriving the partition function of the system in the zero field limit -- was able to derive other thermodynamic quantities like the free energy, specific heat, magnetic susceptibility and correlation functions. The operators $j_{x}, j_{y}$ and $j_{z}$ do commute in the spin-$\infty$ limit, reducing therefore the model to that of a classical Heisenberg one. As remarked in \cite{art:Fisher-1964}, the thermodynamics of spin-$\infty$ seems impossible to describe analytically in the presence of a finite magnetic field using the normal textbook methods. We claim that this infinite-spin problem can be solved employing the EC method discussed above.
\par To this end, let us consider a finite number $N$ of noninteracting spin-$\infty$s in the presence of a static magnetic field of magnitude $B$, again assuming their kinetic energy is negligible with respect to the interaction energy with the field. For such a spin system, $E_0 \to -\infty$. We are dealing here with a system of identical spins in the presence of a magnetic field and so Eq. \eqref{eq:Partition_function_G_A} applies; namely, the partition function of the system is
				\begin{equation}
				\begin{split}
				\mathpzc{Z}_{\infty^N} & = e^{-\beta_0} \ G_{\infty^N, \Omega} \left(e^{-\beta'}\right) \\
				& = e^{-\beta_0} \ \sum^{\infty}_{n=0} \binom{
					N + n - 1 }{
					n
					} e^{-\beta'n} \\
					& = e^{-(\beta'_0-\frac{1}{2}\beta'N)} \left[2 \sinh \left(\frac{1}{2}\beta'\right) \right]^{-N}
				\end{split}
				\end{equation}				 
where we have made use of Eq.s \eqref{eq:gen_infty_Omega} and \eqref{eq:mono_Omega_infinity}. It is interesting to note that if we \emph{assume} the ground state energy of the system to be finite, then $\mathpzc{Z}_{\infty^N}$ becomes none other but the quantum mechanical partition function of a system of $N$ independent harmonic oscillators whenever $\beta'_0 = \frac{1}{2} \beta' N$; the correspondence becomes even clearer if we express the quanta $\epsilon$ in units of $\hbar$, \emph{i.e.} $\epsilon = \hbar \omega$. This formal analogy allows us to readily obtain analytical expressions for the thermodynamic quantities for $N$ noninteracting spin-$\infty$ in the presence of a magnetic field by just looking at their corresponding expression for the quantum mechanical treatment of $N$ independent harmonic oscillators. For example, the Helmholtz free energy $(F)$, internal energy $(U)$, entropy $(S)$, and specific heats at constant volume $(C_V)$ and pressure $(C_P)$ for $N$ noninteracting spin-$\infty$ in a magnetic field $B$ are given by the following expressions:
					\begin{subequations}
					\label{eq:F_U_S_C_P}
					\begin{align}
					F - E_0 & =  k_BT N \ln (1-e^{-\beta'})	\\
					U - E_0 & =   N\beta^{-1}\frac{\beta'}{e^{\beta'}-1} \\
					\frac{S}{k_B} & = N \left[\frac{ \beta'}{e^{\beta'}-1} - \ln \left(1-e^{-\beta'} \right) \right] \\
					C_P & = C_V = Nk_B \beta'^2 \ \frac{e^{\beta'}}{\left(e^{\beta'}-1 \right)^2} \ .
					\end{align}
					\end{subequations}
In the above, we see that even in the limit $E_0 \to -\infty$ we can define $F$ and $U$ with respect to the ground state energy and get finite quantities for the corresponding rescaled energies at low temperatures. Indeed, from Eq. \eqref{eq:F_U_S_C_P} it happens that the rescaled energies and the other thermodynamic quantities all depend on $\beta'$ and $N$, which are both finite -- therefore, the thermodynamics of the system can be considered to be well-defined. In particular, the dependence of the thermodynamics on $\beta'$, \emph{i.e.} the ratio between the  transition energy between two consecutive levels and the mean energy per spin, is very interesting and one can easily analyze the limit cases $\beta' \gg 1$ and $\beta' \ll 1$. The magnetization and the spontaneous magnetization of the system are both easily shown to tend to  $(-\infty)$ -- which is intuitively right since $E_0 \to -\infty$. The susceptibility of the system is found to tend to $(+\infty)$ at high temperature and $0$ at low temperature. Interestingly, the derivative of the magnetization  with respect to the field $B$, for very small $B$ and $\beta' \ll 1$, is $\sim \frac{N}{V}\frac{k_B T}{ B^2}$ -- clearly a violation of Curie's law, namely being $\varpropto T $ instead of $\varpropto 1/T$. These are unfamiliar magnetic properties.

\subsection{Symmetric and Antisymmetric compositions of a system of identical spins}\label{subsec:symmteric_antisymmetric}
The discussion of symmetric exclusion processes (SEP) above allows us to undertake yet another important application of the EC approach to CGD, \emph{viz.} the symmetric and antisymmetric composition of spins. For starters, it is clear that we can only speak of these type of spin compositions when we are dealing with identical spins. 
\par Reinterpreted in the language of symmetric exclusion processes, determining the set $\{\Omega_n\}$ of a collection of $N$ $j-$spins is the same as determining how many ways one can distribute $0\leq n \leq 2J_0$ particles on a complete graph of $N$ vertices, in which each vertex can accommodate not more than $2j$ particles. If we should translate the symmetric exclusion process picture back into $SU(2)$ representation, we have that the $N$ vertices represent the spins and the particles are the Holstein-Primakoff bosons\cite{art:Holstein-1940} -- whose occupation number indicate a particular orientation of the spin. In table \ref{tab:SEP_SU2_dictionary} we compile a short dictionary which summarizes these translations.
\begin{table}[h!]
\centering
\begin{tabular}{|c|c|}
\hline
SEP language & $SU(2)$ language \\
\hline
Vertex & Spin\\
Particle & Holstein-Primakoff boson \\
Vertex of maximum occupancy $2j$ & Spin-$j$\\
Occupancy of a vertex & Spin orientation\\
Complete graph & Interaction of each spin with all others\\
\hline
\end{tabular}
\caption{Dictionary for translating SEPs into $SU(2)$ language, and vice versa.}
\label{tab:SEP_SU2_dictionary}
\end{table}

Thinking now in SEP terms, we see that the key characteristic of the CGD counting problem described above is that both the vertices and the particles are assumed \emph{distinguishable}, which translated into $SU(2)$ language signifies that we are considering all the $j-$spins as distinguishable, along with their orientations. It is indisputable that the various orientations of a $j-$spin are distinguishable, but the spins in an IS system are indeed \emph{indistinguishable}, as required by quantum mechanics.  Recognizing the indistinguishableness of the spins is a springboard for achieving a symmetric (bosonic, $-$) or antisymmetric (fermionic, $+$) composition of an IS system. 
With this in mind, we now consider these representations separately.

\subsubsection{Antisymmetric composition of an IS system}
The way to go about this problem is very similar to how we obtained the CGD of a collection of spins, considering them as distinguishable; namely, we first find the generating function for $\Omega_{\mathpzc{A},n}$, \emph{i.e.} $G_{\mathpzc{A},\Omega}(q)$, then multiply it by $(1-q)$ to get the generating function for $\lambda_{\mathpzc{A},n}$, $G_{\mathpzc{A},\lambda}(q)$.
\par Let $\mathpzc{A}=\left\lbrace j^N\right\rbrace$. Besides the indistinguishableness of the spins, the other prerequisite to satisfy in order to obtain the generating function for $\{\Omega^+_n \}$ and $\{\lambda^+_n \}$ of the antisymmetric composition of an IS system is to impose the condition that no more than one spin can have the same orientation (\emph{antisymmetric composition constraint}). In SEP terms, we are imposing the constraint that two vertices of the same complete graph cannot have the same occupation number of particles. In the following, we shall refrain from switching directly to the integer representation of the orientations of a spin  right from the beginning. In lieu, we shall do the switching later on in our discussion. 
\par The generating function for $\{\Omega^+_n\}$ demands that the number of spins (or vertices) be known and fixed. Since $\Omega^+_n$ counts the number of ways of obtaining the total $z-$eigenvalue $M$ which corresponds to $n$ in the integer representation (given a fixed number of spins), in imposing the antisymmetric constraint we also need to keep record of the number of spins contributing to $M$. First of all, an orientation may contribute only once or not at all to $M$, according to the antisymmetric constraint. In addition, it is obvious that an orientation $m$ ($-j \leq m \leq j $) contributing to $M$ will increase the total $z-$eigenvalue by $m$ and increase the number of spins by $1$. From these, we conclude that the generating function for $\{\Omega^+_{M} \}$, $\mathscr{G}^+_{j, \Omega}(x)$, must be
				\begin{equation}
				\label{eq:antisym_gen_Omega}
				\mathscr{G}^+_{j, \Omega}(x) = \prod^{j}_{m=-j} \left( 1+ax^m \right) = \sum_{M,N} \Omega^+_{j^N,M}\  x^M a^N
				\end{equation}
where $\Omega^+_{j^N,M}$ is the number of ways of obtaining the total $z-$component $M$ for an IS system of $N$ spin$-j$s upon imposing the antisymmetric constraint. Thus, while $x$ keeps track of the total $z-$component, $a$ counts the spins. To verify that the antisymmetric constraint is satisfied by $\mathscr{G}^+_{j, \Omega}(x)$, we may expand the product in Eq. \eqref{eq:antisym_gen_Omega} using the binomial theorem and we get
				\begin{equation}
				\prod^{j}_{m=-j} \left( 1+ax^m \right) = \sum_{N} \left[ \sum_{\substack{s_0, s_1, \ldots , s_{2j}\geq 0 \\ s_0 + s_1 + \ldots + s_{2j} =N}} \binom{1}{s_0} \binom{1}{s_1} \ldots \binom{1}{s_{2j}} x^{s_0\cdot j + s_1 \cdot (j-1) + \ldots + s_{2j}(-j)} \right] a^N
				\end{equation}
from which it is crystal clear that no spin orientation can contribute more than once to $M$. 
\par If we now switch to the integer representation of the orientations, then $ax^m \to Aq^k$ if $m=j-k \ (0 \leq k \leq 2j)$, where $A := ax^j$ and $q := x^{-1}$. Eq. \eqref{eq:antisym_gen_Omega} may now be rewritten as
				\begin{subequations}
				\begin{align}
				\label{eq:antisym_gen_Omega_2}
				\mathscr{G}^+_{j, \Omega}(q) & = \prod^{2j}_{k=0} \left( 1+Aq^k \right) = \sum_{N,n} \Omega^+_{j^N,n}\  q^n A^N \\
				& = (1 + A)^{2j+1}_q \label{eq:antisym_gen_Omega_2_b}
				\end{align}
				\end{subequations}		
where in Eq. \ref{eq:antisym_gen_Omega_2_b} we have recognized 	$\mathscr{G}^+_{j, \Omega}(q)$ as being the $q-$analogue of $(y+A)^{2j+1}\vert_{y=1}$. Here again, $A$ counts the number of spins.
\par Applying now Gauss' binomial formula\cite{book:Kac-2001}, it follows then from Eq. \eqref{eq:antisym_gen_Omega_2_b} that
				\begin{equation}
				\label{eq:antisym_gen_Omega_3}
				\mathscr{G}^+_{j, \Omega}(q) = \sum^{2j+1}_{N=0} {2j+1 \brack N}_q q^{\binom{N}{2}} A^N = 	 \sum^{2j+1}_{N=0} G^+_{j^N,\Omega}(q) \ A^N		 
				\end{equation}
where ${a \brack b}_q$ is the $q-$binomial coefficient, defined as\cite{book:Kac-2001}
				\begin{equation}
				{\ a  \ \brack \ b \ }_q := \frac{\left[ a \right]_q!}{\left[ a-b\right]_q! \left[ b \right]_q!} = {\ a  \ \brack \ a-b \ }_q \qquad (a,b \in \mathbb{N}_0)
				\end{equation}
and where $\left[a \right]_q!$, the $q-$analogue of $a!$, is defined as
				\begin{equation}
				\left[ a \right]_q! := \begin{cases}
				1  & \mbox{if } a=0\\
				\left[a \right]_q\cdot \left[a-1 \right]_q \cdots \left[1 \right]_q & \mbox{if } a=1,2,3,\ldots
				\end{cases} \ .
				\end{equation}
$G^+_{j^N,\Omega}(q)$ is the  generating function for $\left\lbrace \Omega^+_{j^N,n} \right\rbrace$, and from Eq. \eqref{eq:antisym_gen_Omega_3} we have
				\begin{equation}
				\label{eq:antisymmetric_G_PLUS_int}
				\begin{split}
				G^+_{j^N,\Omega}(q) & := {2j+1 \brack N}_q q^{\binom{N}{2}} = q^{\binom{N}{2}} \prod^{N-1}_{k=0} \frac{\left[2j+2-N+k \right]_q}{\left[1+k \right]_q} \\
				& = \sum^{N(4j+1-N)/2}_{n=\binom{N}{2}} \Omega^+_{j^N,n} \ q^n \ .
				\end{split}
				\end{equation}
				
Resorting to Theorem \ref{thm:q_binomial_ps_explicit} (see Appendix), we conclude that
				\begin{equation}
				\label{eq:antisymmetric_Omega_PLUS_a}
				\Omega^+_{j^N,n} = p(2j+1-N,N,f(n,N)) = \sum^{f(n,N)}_{\nu=0} \phi^{2j+1,N}_{\nu,f(n,N)} \ , \qquad \ f(n,N) :=  n-\binom{N}{2}
				\end{equation}
where $p(n,m,k)$ is the number of distinct partitions of $k$ into at most $m$ parts, each not greater than $n$. For example, $p(3,4,5)= 4$ since the number of partitions of $5$ with at most four parts, each not greater than $3$ are: $(3,2), (3,1,1),(2,2,1), (2,1,1,1)$. $\phi^{n,m}_{\nu,k}$, on the other hand, is the number of distinct partitions of the integer $k$ into \emph{exactly} $\nu$ parts, \emph{i.e.} $m_1 + m_2 + \ldots + m_\nu = k$, where each part is at most $\nu(n-m)$ and $m_1 \geq m_2 \geq \ldots \geq m_\nu$. For example, $\phi^{7,4}_{3,5}=2$ since $(3,1,1)$ and $(2,2,1)$ are the only partitions of $5$ with exactly three parts which satisfy the prescription above. In the Appendix, we derive a general analytical formula (Eq. \eqref{eq:phi_a_bs}) for the $\left\lbrace \phi^{a,b}_{\nu, k} \right\rbrace$ in terms of the Heaviside step-function defined in Eq. \eqref{eq:Heaviside}. Like $G_{\mathpzc{A},\Omega}(q)$, $G^+_{j^N,\Omega}(q)$ is reciprocal and unimodal.
\par Let $ \lambda^+_{j^N,\kappa} $ denote the multiplicity of the $\kappa-$th resulting spin (in the integer representation) after the antisymmetric composition of $N$ $j-$spins. The generating function for the $\left\lbrace \lambda^+_{j^N,\kappa} \right\rbrace$, $G^+_{j^N,\lambda}(q)$, is obtained from $G^+_{j^N,\Omega}(q)$ analogously to Eq. \eqref{eq:gen_lambda_n}, namely,
				\begin{equation}
				G^+_{j^N,\lambda}(q) = (1-q)\ G^+_{j^N,\Omega}(q) = \sum^{1+N(4j+1-N)/2}_{\kappa=\binom{N}{2}} \lambda^+_{j^N,\kappa} \ q^\kappa \ .
				\end{equation}				  
Certainly,
				\begin{equation}
				\label{eq:antisymmetric_lambda_a}
				\lambda^+_{j^N,\kappa} =p(2j+1-N,N,f(\kappa,N)) - p(2j+1-N,N,f(\kappa -1 ,N))  \ .
				\end{equation}
Owing to the reciprocity of $G^+_{j^N,\lambda}(q)$, it is easy to prove that
				\begin{equation}
				\lambda^+_{j^N,\kappa} = -\lambda^+_{j^N,2J_0+1-\kappa}
				\end{equation}	
(recall $J_\kappa=J_0-\kappa=jN-\kappa$). The unimodality of $G^+_{j^N,\Omega}(q)$ implies that $G^+_{j^N,\lambda}(q)$ may also present sinking terms analogously to $G_{\mathpzc{A},\lambda}(q)$. 
\par The multiset $\mathpzc{E}_{j^N}^+$, whose elements are the resulting spins of the antisymmetric composition, is simply
				\begin{equation}
				\mathpzc{E}_{j^N}^+ = \left\lbrace  J_\kappa ^{\lambda^+_{j^N,\kappa}} \left| \lambda^+_{j^N,\kappa} > 0 \right. \right\rbrace \ 
				\end{equation}
and $\max \mathpzc{E}_{j^N}^+ = N(2j+1-N)/2$, whose multiplicity is always $1$. In other words, if we happen to perform an antisymmetric composition of $N (\leq 2j+1)$ spin-$j$ IS system, the largest spin to be generated from such a composition is $J_{\binom{N}{2}}= N(2j+1-N)/2$.
\par We draw the reader's attention to the fact that one can represent $G^+_{j^N,\Omega}$ and $G^+_{j^N,\lambda}$ in real spin projections by making the transformation $q \to x^{-1}$ and multiplying $G^+_{j^N,\Omega}$ and $G^+_{j^N,\lambda}$ by $x^{J_0}$ (which derives from $A^N$, see Eq. \eqref{eq:antisym_gen_Omega_2}). It goes on without saying that working in the integer representation of the spin projections greatly simplify the derivations.
\par Interestingly, in the spin-$\infty$ limit, we derive from Eq. \eqref{eq:antisym_gen_Omega_2_b} that
				\begin{equation}
				\label{eq:antisymm_infty_limit_1}
				\lim_{j \to \infty} \mathscr{G}^+_{j, \Omega}(q) := \mathscr{G}^+_{\infty, \Omega}(q) = \left(1 + A \right)^\infty_q = E^{A/(1-q)}_q
				\end{equation}
where $E^{y}_q$ (Euler's first identity) is one of the $q-$analogues of the exponential function $e^x$, defined as\cite{book:Kac-2001},
				\begin{equation}
				E^{y}_q := \sum^{\infty}_{k=0} q^{\binom{k}{2}} \frac{y^k}{ \left[ k \right]_q! } = \sum^{\infty}_{k=0} q^{\binom{k}{2}} \frac{(1-q)^k y^k}{ (1-q)(1-q^2)\cdots (1-q^k) } \ .
				\end{equation}
The spin$-\infty$ limit of the generating function  $G^+_{j^N,\Omega}(q)$, $G^+_{\infty^N,\Omega}(q)$, readily follows from Eq. \eqref{eq:antisymm_infty_limit_1} (or even Eq. \eqref{eq:antisymmetric_G_PLUS_int}), \emph{viz.}
				\begin{equation}
				\label{eq:antisymm_infty_limit_2}
				G^+_{\infty^N,\Omega}(q)  = \frac{q^{\binom{N}{2}}}{(1-q)(1-q^2)\ldots (1-q^N)} = \sum^\infty_{n=\binom{N}{2}} \Omega^+_{\infty^N,n} \ q^n \ .
				\end{equation}
After a close examination of Eq. \eqref{eq:antisymm_infty_limit_2} we see that
				\begin{equation}
				\label{eq:antisymm_infty_limit_3}
				\Omega^+_{\infty^N,n} = p_N \left( f(n,N) \right)
				\end{equation}
where $p_N(k)$ is the number of partitions of $k$ into at most $N$ parts. Eq. \eqref{eq:antisymm_infty_limit_3} could have also been derived from Eq. \eqref{eq:antisymmetric_Omega_PLUS_a} by just taking the limit $j \to \infty$. Furthermore, the relation for spin multiplicities $\left\lbrace \lambda^+_{\infty^N} \right\rbrace$ of the multiset $\mathpzc{E}^+_{\infty^N}$ also follows from Eq. \eqref{eq:antisymmetric_lambda_a}:
				\begin{equation}
				 \lambda^+_{\infty^N} = p_N \left( f(n,N) \right) - p_N \left( f(n-1,N) \right) \ .
				\end{equation}

\subsubsection{Symmetric composition of identical spins}
Unlike the antisymmetric composition, two or more identical spins can have the same orientation in the symmetric composition (\emph{symmetric constraint}). Translated into SEP language, we are allowing two or more vertices of the complete graph to have the same occupation number of particles. The indistinguishableness of the spins is crucial here as well.
\par Again, let $\mathpzc{A}=\left\lbrace j^N \right\rbrace$. We shall represent the spins resulting from the symmetric composition by the multiset $\mathpzc{E}^-_{j^N}$. Let $\Omega^-_{j^N,M}$ be the number of elements of $\mathpzc{E}^-_{j^N}$ admitting the total spin orientation $M$ as an admissible one. Since the contribution of an orientation $m\ (-j \leq m \leq j)$ to $M$ is simply $m$ and according to the symmetric constraint any number of spins can have the same orientation, the generating function for the integers $\left\lbrace \Omega^-_{M} \right\rbrace$, $\mathscr{G}^-_{j,\Omega}$, is nothing but of the following form
				\begin{equation}
				\label{eq:sym_gen_Omega}
				\begin{split}
				\mathscr{G}^-_{j,\Omega}(x) & = \left(1+ax^j+a^2x^{2j}+\ldots \right) \left(1 + ax^{j-1} + a^2x^{2(j-1)}+\ldots \right)\cdots \left(1+ax^{-j}+a^2x^{-2j}+\ldots \right) \\
				& = \prod^j_{m=-j} \frac{1}{1-ax^m} = \sum_{M,N} \Omega^-_{j^N,M} \ x^M a^N \ ,
				\end{split}
				\end{equation}	%
where we have introduced the variable $a$ to count the spins. The expansion of the product in Eq. \eqref{eq:sym_gen_Omega} using, again, the binomial theorem returns the expression
				\begin{equation}
				\prod^{j}_{m=-j} \frac{1}{\left( 1-ax^m \right)} = \sum_{N} \left[ \sum_{\substack{s_0, s_1, \ldots , s_{2j}\geq 0 \\ s_0 + s_1 + \ldots + s_{2j} =N}} \binom{s_0}{s_0} \binom{s_1}{s_1} \ldots \binom{s_{2j}}{s_{2j}} x^{s_0\cdot j + s_1 \cdot (j-1) + \ldots + s_{2j}(-j)} \right] a^N
				\end{equation}
which certifies a successful implementation of the symmetric constraint.
\par Switching now to the integer representation of the spin orientations as we did earlier, \emph{i.e.} $ax^m \to Aq^k$, Eq. \eqref{eq:sym_gen_Omega} becomes
				\begin{equation}
				\label{eq:symmetric_GENERAL_G_Omega}
				\mathscr{G}^-_{j,\Omega}(q) = \prod^{2j}_{k=0} \frac{1}{1-Aq^k} = \frac{1}{\left(1-A \right)^{2j+1}_q} \ .
				\end{equation}
Applying now Heine's binomial formula\cite{book:Kac-2001}, we obtain
				\begin{equation}
				\mathscr{G}^-_{j,\Omega}(q) = \sum^{\infty}_{N=0} { 2j+N \brack N }_q   A^N = \sum^{\infty}_{N=0} G^-_{j^N,\Omega}(q) \   A^N \ ,
				\end{equation}		
where, of course,
				\begin{equation}
				G^-_{j^N,\Omega}(q) = { 2j+N \brack N }_q = \sum^{2J_0}_{n=0} \Omega^-_{j^N,n} \ q^n \ .
				\end{equation}													
$\Omega^-_{j^N,n}$ is thus the number of Young diagrams for the partition of $n$ which can fit inside a $2j \times N$ rectangle (see Appendix). Indeed, it follows from Theorem \ref{thm:q_binomial_ps_explicit} that,
				\begin{equation}
				\Omega^-_{j^N,n} = p(2j,N,n) = \sum^n_{\nu=0} \phi^{2j+N,N}_{\nu,n} \ .
				\end{equation}
The generating function, $G^-_{j^N,\lambda}(q)$, for the multiplicities $ \left\lbrace \lambda^-_{j^N,\kappa} \right\rbrace$ of the distinct elements of $\mathpzc{E}^-_{j^N}$, is as usual given by the relation
				\begin{equation}
				\label{eq:symm_G_j_N_a}
				G^-_{j^N,\lambda}(q) = (1-q)\ G^-_{j^N,\Omega}(q) = \sum^{2J_0+1}_{\kappa = 0} \lambda^-_{j^N,\kappa} \ q^\kappa \ .
				\end{equation}
We also have that
				\begin{equation}
				\lambda^-_{j^N,\kappa} = p(2j,N,\kappa) - p(2j,N,\kappa-1) \ ,
				\end{equation}
and again,
				\begin{equation}
				\lambda^-_{j^N,\kappa} = -\lambda^-_{j^N,2J_0+1-\kappa} \ .
				\end{equation}
Finally, the multiset $\mathpzc{E}^-_{j^N}$ is obtained as
				\begin{equation}
				\mathpzc{E}^-_{j^N} = \left\lbrace J_\kappa ^{\lambda^-_{j^N,\kappa}} \left| \lambda^-_{j^N,\kappa} > 0 \right. \right\rbrace \ .
				\end{equation}
For a given $N$ spin$-j$ system, we observe from Eq. \eqref{eq:symm_G_j_N_a} that the largest spin resulting from the symmetric composition is $J_0$, a result which could have been expected intuitively.
\par Considering now the spin-$\infty$ limit, it is immediate from Eq. \eqref{eq:symmetric_GENERAL_G_Omega} that
					\begin{equation}
					\label{eq:symmetric_G_infty_a}
					\lim_{j \to \infty} \mathscr{G}^-_{j,\Omega}(q) :=  \mathscr{G}^-_{\infty,\Omega}(q) = \frac{1}{\left(1-A \right)^{\infty}_q} = e^{A/(1-q)}_q
					\end{equation}
where $e^y_q$ (Euler's second identity) is another $q-$analogue of $e^x$ given by the expression\cite{book:Kac-2001},
					\begin{equation}
					\label{eq:Euler_second_id}
					e^y_q : = \sum^{\infty}_{k=0} \frac{y^k}{\left[k \right]_q!}  \ .
					\end{equation}
It is easy to verify that the generating function for $\left\lbrace \Omega^-_{\infty^N,n} \right\rbrace$, $G^-_{\infty^N,\Omega}(q)$, is of the form
					\begin{equation}
					\label{eq:symm_infty_limit_2}
					G^-_{\infty^N,\Omega}(q) = \frac{1}{(1-q)(1-q^2)\cdots(1-q^N)} = \sum^\infty_{n=0} \Omega^-_{\infty^N,n} \ q^n \ , 
					\end{equation}
from which we immediately have
					\begin{equation}
					\Omega^-_{\infty^N,n} = p_N(n) \ \quad \ \mbox{and} \ \quad \ \lambda^-_{\infty^N,n} = p_N(n) - p_N(n-1) \ .
					\end{equation}
Interestingly, the generating functions $\mathscr{G}^+_{\infty,\Omega}(q)$ and  $\mathscr{G}^-_{\infty, \Omega}(q)$ are related to each other through the equation,
					\begin{equation}
					 \mathscr{G}^-_{\infty,\Omega}(1/q) = \mathscr{G}^+_{\infty,\Omega}(q) 
					\end{equation}
where $\mathscr{G}^-_{\infty,\Omega}(1/q)$ must be understood as $e^{A/(1-q)}_{1/q}$.  In fact, the antisymmetric $\mathscr{G}^+_{\infty,\Omega}(q) $ is nothing but Euler's first identity, while the symmetric  $\mathscr{G}^-_{\infty,\Omega}(q)$ is identically Euler's second identity -- which are known to be related to each through the relation\cite{book:Kac-2001}: $e^{x}_{1/q} = E^x_q$. Additionally, from Eqs. \eqref{eq:antisymm_infty_limit_2} and \eqref{eq:symm_infty_limit_2} we readily have
					\begin{equation}
					G^+_{\infty^N,\Omega}(q)=q^{\binom{N}{2}}\ G^-_{\infty^N,\Omega}(q) \ .
					\end{equation}
Definitely, a similar relation exists between $G^+_{\infty^N,\lambda}(q)$ and $G^-_{\infty^N,\lambda}(q)$.
\par The symmetric and antisymmetric compositions have been discussed in \cite{art:Polychronakos-2016} on the basis of the grand partition function. Indeed, it is readily observed that if one sets $a \to e^{\mu \beta}$ ($\mu$ is the chemical potential of the system) and $x \to e^{-\beta'}$, $\mathscr{G}^{\pm}_{j,\Omega}(x)$ becomes the grand partition function of a spin-$j$ gas of noninteracting bosons and fermions, respectively\citep{art:Polychronakos-2016}. Due to a typo, the factor $q^{\binom{N}{2}}$ appearing in $G^{+}_{j,\Omega}(x)$ (and some subsequent conclusions) is missing though in the results of \citep{art:Polychronakos-2016}.

\subsection{Number theory: multi-restricted composition}\label{subsec:number_thoery}
The main results of this paper can also be applied to some very interesting problems in enumerative combinatorics, and this should come as no surprise. We briefly consider below the example of multi-restricted composition of a given integer.
\par Say $\mathscr{C}(n^{d_1}_1, \ldots, n^{d_\sigma}_\sigma; n)$ the number of compositions of the integer $n$ into at most $N = \sum^{\sigma}_{\alpha=1} d_\alpha$ parts, with $d_\alpha$ parts being at most $n_\alpha$ ($\alpha \in \{1, \ldots, \sigma\}$). For example, $\mathscr{C}(2^5, 4^3, 5^4; 16)$ is the number of compositions of the integer $16$ into at most twelve parts, with five of them being at most of value $2$, three being at most $4$ and four being at most $5$.
\par To determine $\mathscr{C}(n^{d_1}_1, \ldots, n^{d_\sigma}_\sigma; n)$, we need to make the following distinction:
\paragraph{\textbf{The number zero is an admissible part}.} In such event,  $\mathscr{C}(n^{d_1}_1, \ldots, n^{d_\sigma}_\sigma; n) \to \mathscr{C}_0(n^{d_1}_1, \ldots, n^{d_\sigma}_\sigma; n)$ (and reserve the former for the case whereby zero is not an admissible part).  In this case, $\mathscr{C}_0(n^{d_1}_1, \ldots, n^{d_\sigma}_\sigma; n)$ is exactly $\Omega_{\mathpzc{A},n}$ of the CGD of $N$ spins: $d_1$ being spin-$\frac{n_1}{2}$ representations of $SU(2)$, $d_2$ being spin-$\frac{n_2}{2}$, and so on. Thus, $\mathpzc{A}=\{\left(\frac{n_1}{2}\right)^{d_1}, \ldots , \left(\frac{n_\sigma}{2}\right)^{d_\sigma}\}$, and given that $\mathscr{C}_0(n^{d_1}_1, \ldots, n^{d_\sigma}_\sigma; n)=\Omega_{\mathpzc{A},n}$ we can use any of the three methods discussed above to determine 
$\mathscr{C}_0(n^{d_1}_1, \ldots, n^{d_\sigma}_\sigma; n)$. Resorting to the generating function approach for example, we can -- following Eq. \eqref{eq:gen_Omega_n_c} --   state that
				\begin{equation}
				\boxed{
				\prod^{\sigma}_{\alpha = 1} \left(\sum^{n_\alpha}_{i=0} q^i\right)^{d_\alpha} = \prod^{\sigma}_{\alpha = 1} \left(\left[n_\alpha+1\right]_q\right)^{d_\alpha}=  \sum^{\sum^{\sigma}_{\alpha=1} d_\alpha n_\alpha}_{n=0} \mathscr{C}_0(n^{d_1}_1, \ldots, n^{d_\sigma}_\sigma; n) \ q^n } \ .
				\end{equation}
For instance,
				\begin{equation}
				\left([3]_q\right)^5\left([5]_q\right)^3\left([6]_q\right)^4=\sum^{42}_{n=0} \mathscr{C}_0(2^5, 4^3, 5^4; n) \ q^n \ .
				\end{equation}
Certainly, it also follows from Eq. \eqref{eq:symmetric_Omega} that
				\begin{equation}
				\label{eq:multi_restricted_partition_id}
				\mathscr{C}_0(n^{d_1}_1, \ldots, n^{d_\sigma}_\sigma; n) = \mathscr{C}_0\left(n^{d_1}_1, \ldots, n^{d_\sigma}_\sigma; \sum^{\sigma}_{\alpha=1} d_\alpha n_\alpha - n\right) \ .
				\end{equation}
\par For the limit case $\mathscr{C}_0(c^d; n)$, it is not hard to see that $\mathscr{C}_0(c^d; n)=\Omega_{(c/2)^d,n}$, thus we may directly employ Eq. \eqref{eq:mono_spin_Omega_n} or \eqref{eq:mono_spin_Omega_n_hypergeo}. Indeed, based on the latter, we have 
				\begin{equation}
				\label{eq:p_0}
				\boxed{
				\mathscr{C}_0\left(c^d; n \right) =\binom{
				d+n-1}{
				n
				} \ \tensor[_{c+2}]{F}{_{c+1}} \left(\left. \myatop{-d, -\frac{n}{c+1},  \ldots ,- \frac{n-i}{c+1}, \ldots - \frac{n-c}{c+1}}{-\frac{d+n-1}{c+1}, \ldots ,- \frac{d+n-1-i'}{c+1}, \ldots - \frac{d+n-1-c}{c+1}}  \right\vert 1 \right)} \ .
				\end{equation}	
Needless to say, the generating function for $\{\mathscr{C}_0(c^d; n)\}$ is
				\begin{equation}
				\left(\left[ c+1\right]_q\right)^d = \sum^{dc}_{n=0} \mathscr{C}_0(c^d; n) \ q^n \ .
				\end{equation}
\paragraph{\textbf{Zero is not an admissible part}}	We can still map this case to the case in which zero is an admissible part. All we need to do is to make the following transformations: $n_\alpha \longrightarrow n_\alpha - 1$ , $n \longrightarrow n - N$. Therefore,
				\begin{equation}
				\label{eq:multi_restricted_partition_zero_not}
				\mathscr{C}(n^{d_1}_1, \ldots, n^{d_\sigma}_\sigma; n) = \mathscr{C}_0 \left( (n_1-1)^{d_1}, \ldots, (n_\sigma-1)^{d_\sigma}; n-N \right) \ .
				\end{equation}				
All we have done is just a rescaling of the integers. Moreover, for the limit case $\mathscr{C}(c^d;n)$, we can write
				\begin{equation}
				\label{eq:multi_restricted_partition_zero_not_b}
				\mathscr{C}(c^d;n) =\mathscr{C}_0 \left((c-1)^d;n-d\right)
				\end{equation}

Note that $n$ in Eqs. \eqref{eq:multi_restricted_partition_zero_not} and \eqref{eq:multi_restricted_partition_zero_not_b} is $N \leq n \leq \sum^{\sigma}_{\alpha=1} d_\alpha n_\alpha $.

\subsection{A simple statistical application: the rolling of $N$ dices}\label{subsec:dice}
The equations derived in the last section may find application in a number of statistical problems. We illustrate here just one.
Suppose we throw $N$ dices. We ask: what is the probability $P_6(N,n)$ that the sum of the outcomes is $n$? Certainly, 
				\begin{subequations}
				\begin{align}
				P_6(N,n) & = \frac{1}{6^{N}} \times \mathscr{C}(6^N;n) = \frac{1}{6^{N}} \times \mathscr{C}_0(5^N;n') \\
				& =\frac{1}{6^{N}} \times \binom{
				n-1 }{
				n'
				} \ \tensor[_{7}]{F}{_{6}} \left(\left. \myatop{-N, -\frac{n'}{6}, -\frac{n'-1}{6}  ,-\frac{n'-2}{6}, -\frac{n'-3}{6}, -\frac{n'-4}{6},- \frac{n'-5}{6}}{-\frac{n-1}{6}, -\frac{n-2}{6} ,-\frac{n-3}{6}, -\frac{n-4}{6},-\frac{n-5}{6}, -\frac{n-6}{6}}  \right\vert 1 \right) \ .
				\end{align}
				\end{subequations}
where $n' := n - N$, and obviously $N \leq n \leq 6N$, where use has been made of Eq. \eqref{eq:p_0}.
\par Consider now the coupling of $N$ spin-$5/2$s. $P_6(N,n)$ can be re-interpreted here as the probability that a randomly chosen $z-$eigenvalue in the coupled representation is of value $M_n$.

\section{Conclusion}
The seemingly unavailing effort of mapping the $z-$eigenvalues of $SU(2)$ spins to the set of natural numbers allows one to place the composition of an arbitrary multiset of $SU(2)$ spins into the context of enumerative combinatorics, as we have shown above. The striking gain here is a very general re-interpretation of the spin composition conundrum which allows one to solve the problem in diverse ways and in general terms. This is a feat hardly achievable without the $z-$eigenvalues-positive integers mapping. That this simplifies the analytic treatment of spin compositions is remarkable in its own right. But in retrospect, we note that the hallmark of quantum mechanics is the quantization of observables -- to which we can always associate a countable set or multiset of integers, in general. And this fact constitutes the very first instance of connection between quantum mechanics and discrete mathematics in general. Our work shows how and why EC can be a valid mathematical toolkit to be included in the arsenal of mathematical techniques employed in quantum mechanics. The proving of the minimum $J_m$ (sec. \ref{sec:min_max_J}), the three methods outlined in secs. \ref{sec:momentum_inv_sub_dim} and \ref{sec:multiplicities} for analytic CGD and the derivation of analytic symmetric and antisymmetric spin compositions (sec. \ref{subsec:symmteric_antisymmetric}) are all indicators of its relevance to quantum mechanics. The connection between EC and lattice paths may also offer interesting approaches to the study of spin dynamics, and bridge the gap between the latter and lattice gas models. We believe this to be an important effort worth pursuing.
\par Moreover, besides the applications discussed above, the analytical expressions presented here may find very broad applications in diverse fields -- from quantum algebra, quantum information, quantum statistical mechanics to group theory, not to mention quantum chemistry, and in particular quantum magnetic resonance (as we shall show in an upcoming article). The analytical method presented here may also be employed to study spin-orbit coupling in many-body systems.

\section*{Acknowledgment}
The authors thank Prof. J. Ricardo G. Mendonça (Universidade de São Paulo), Prof. Alexios Polychronakos (CUNY), Prof. Konstantinos Sfetsos (National and Kapodistrian Univ. of Athens), Prof. Maurizio Casarin (Univ. of Padova), Prof. Giorgio J. Moro (Univ. of Padova) and Prof. Antonino Polimeno (Univ. of Padova) for carefully reading an abridged version of this manuscript and for their insightful comments. Special thanks from JAG to Prof. J. Ricardo G. Mendonça and Mr. Andrea Piserchia (SNS). The authors employed computer facilities at SMART@SNS laboratory, to which they are greatly grateful. The research leading to these results has received funding from the European Research Council under the European Union's Seventh Framework Program (FP/2007-2013) / ERC Grant Agreement n. [320951].

\appendix
\section{Power series representation of $q-$binomial coefficients}
The $q-$binomial coefficient ${\ a \ \brack \ b \ }_q$ is defined as\cite{book:Kac-2001}:
				\begin{equation}
				\begin{split}
				{ \ a \ \brack \ b \ }_q & = \frac{[a]_q!}{[b]_q![a-b]_q!} = { \ a \ \brack \ a-b \ }_q  \\
				& = \frac{[a]_q[a-1]_q \cdots [a-b+1]_q}{[b]_q!} \\
				&= \frac{\left(q^a-1\right)\left(q^{a-1}-1\right)\cdots\left(q^{a-b+1}-1\right)}{\left(q^b-1\right)\left(q^{b-1}-1\right) \cdots \left(q-1\right)}
				\end{split}
				\end{equation}
for $0 \leq b \leq a$; while for $b > a$, ${\ a \ \brack \ b \ }_q=0$. It is of common knowledge that ${\ a \ \brack \ b \ }_q$ is a polynomial in $q$ of degree $b(a-b)$, and whose leading coefficient is $1$.
\par The power series representation of $q-$binomial coefficients is commonly obtained by resorting to combinatorial interpretations of the $q-$binomial coefficients. There may be many such combinatorial interpretations but the following two are widely known in the literature:
\begin{itemize}
\item \textsc{Interpretation 1} (see also \citep[Theorem 6.1, pg. 19]{book:Kac-2001}) Say $\mathbb{N}_{\leq a} :=\{1,2,\ldots , a\}$ and say $\mathbb{N}^{|b|}_{\leq a}$ the collection of all distinct subsets of $\mathbb{N}_{\leq a}$ of dimension $b$, $0 \leq b \leq a$. If we indicate the $i-$th element of $\mathbb{N}^{|b|}_{\leq a}$ as $\mathbb{N}^{|b|}_{\leq a,i}$, then $\mathbb{N}^{|b|}_{\leq a} = \left\lbrace \mathbb{N}^{|b|}_{\leq a,1}, \mathbb{N}^{|b|}_{\leq a,2}, \ldots , \mathbb{N}^{|b|}_{\leq a, \binom{a}{b}} \right\rbrace$. That is, the cardinality of $\mathbb{N}^{|b|}_{\leq a}$ is $\binom{a}{b}$. If we now take the sum of the elements of each $ \mathbb{N}^{|b|}_{\leq a,i} $, it can be easily proved that there are $\left[b(a-b)+1\right]$ distinct possible sums. We can represent the resulting sums by the multiset $S:= \left\lbrace S_0^{n_0}, S_1^{n_1}, \ldots , S^{n_{b(a-b)}}_{b(a-b)}  \right\rbrace$, where $n_0 + n_1 + \ldots + n_{b(a-b)}=\binom{a}{b}$ and 
				\begin{equation}
				S_l := \frac{b(b+1)}{2} + l \ , \ \qquad \ 0 \leq l \leq b(a-b) \ . 
				\end{equation}

 Then ${\ a \ \brack \ b \ }_q$ satisfies the relation
				\begin{subequations}
				\begin{align}
				{\ a \ \brack \ b \ }_q & = \sum_{S_l \in S} q^{S_l-b(b+1)/2} \label{eq:comb_int_1_a} \\
										& =\sum^{b(a-b)}_{l=0} n_l \  q^{l} \label{eq:comb_int_1_b} \ .
				\end{align}
				\end{subequations}
(For proof of Eq. \eqref{eq:comb_int_1_a}, see \citep[pg. 19]{book:Kac-2001}). Thus the power series representation of $q-$binomial reduces to determining the set of nonnegative integers $\{n_l\}$. There is no analytical formula in the literature, to the knowledge of the authors, which allows a direct calculation of these integers.
\item \textsc{Interpretation 2} Using Young diagrams, the integer $n_l$ in Eq. \eqref{eq:comb_int_1_b} may be interpreted as the number of distinct partitions of the nonnegative integer $l$ which fit into a $b \times (a-b)$ rectangle\cite{link:Weisstein_2}. Here again, to the best of the knowledge of the authors, there is no analytical expression to determine such a partition. In the following, we shall employ the same notation in \cite{book:Andrews-1976} for $n_l$, namely $p(a-b,b,l).$ 
\end{itemize}

$p(a-b,b,l)$ is the number of restricted partitions of $l$ into at most $b$ parts, each being at most $(a-b)$\citep{book:Andrews-1976}. In this section, we shall attempt to derive an analytical expression for the coefficients $\left\lbrace p(a-b,b,l) \right\rbrace$. 
We begin with the following theorem:
\begin{theorem}\label{thm:q_binomial_conv_sum_expansion}
For any pair of nonnegative integers $a,b$, with $a\geq b$, then
				\begin{equation}
				\label{eq:q_binomial_expansion_1}							
				\boxed{		
				{ \ a \ \brack \ b \ }_q  = 1 + \sum^{a-b}_{m_1=1}q^{m_1}  + \sum^{a-b}_{m_1=1}\sum^{m_1}_{m_2=1}q^{m_1+m_2}  + \cdots + \sum^{a-b}_{m_1=1}\sum^{m_1}_{m_2=1}\cdots \sum^{m_{b-1}}_{m_{b}=1} q^{m_1+m_2+m_3+\cdots + m_{b}} 
				 }
				\end{equation}
if $a>b$, while ${ \ a \ \brack \ b \ }_q =1$ if $a=b$.				
\end{theorem}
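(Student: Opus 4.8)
The plan is to prove the identity in Eq.~\eqref{eq:q_binomial_expansion_1} by induction on $b$, exploiting the Pascal-type recurrence for $q$-binomial coefficients. The base case $b=0$ is immediate, since ${a \brack 0}_q = 1$ and the right-hand side collapses to the single leading term $1$. For the inductive step, I would use the standard recurrence ${a \brack b}_q = {a-1 \brack b-1}_q + q^b \, {a-1 \brack b}_q$ (or its companion form), which is the natural engine for peeling off one layer of nested summation at a time. The essential observation is that the $k$-th multiple sum on the right-hand side, namely $\sum_{m_1=1}^{a-b} \sum_{m_2=1}^{m_1} \cdots \sum_{m_k=1}^{m_{k-1}} q^{m_1 + \cdots + m_k}$, is precisely the object generated when one iterates the recurrence, so the claim amounts to showing that the recurrence unfolds into exactly this telescoping tower of nested sums.

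First I would establish the combinatorial meaning of the right-hand side independently, to make the bookkeeping transparent: the $k$-fold sum with upper limits $a-b \geq m_1 \geq m_2 \geq \cdots \geq m_k \geq 1$ counts, weighted by $q^{m_1 + \cdots + m_k}$, the partitions of the exponent into exactly $k$ parts each of size at most $a-b$, with the monotonicity $m_1 \geq \cdots \geq m_k$ enforcing the partition (as opposed to composition) structure. Summing over $k = 0, 1, \ldots, b$ then counts all partitions into \emph{at most} $b$ parts, each part at most $a-b$. By \textsc{Interpretation 2} recorded just above the theorem, this is exactly $\sum_l p(a-b,b,l)\, q^l = {a \brack b}_q$. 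So one route is simply to match the right-hand side against the restricted-partition interpretation already granted in the excerpt, without any induction at all.

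Concretely, I would carry out the steps in this order: (i) fix notation for the nested sum and verify the base cases $b=0$ and $a=b$; (ii) reindex the nested sum so that the constraint $a-b \geq m_1 \geq m_2 \geq \cdots \geq m_k \geq 1$ is manifest, identifying the tuple $(m_1,\ldots,m_k)$ with a partition of $m_1 + \cdots + m_k$ into exactly $k$ parts bounded by $a-b$; (iii) sum over $k$ from $0$ to $b$ and collect the coefficient of $q^l$, obtaining $p(a-b,b,l)$; (iv) invoke \textsc{Interpretation 2} to conclude that this coefficient generating function equals ${a \brack b}_q$. Alternatively, if a self-contained algebraic proof is preferred, I would show by induction on $b$ that applying the $q$-Pascal recurrence to ${a \brack b}_q$ introduces exactly one additional outermost summation layer, matching term by term.

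The main obstacle I anticipate is the careful handling of the summation bounds in the inductive step: the recurrence relates ${a \brack b}_q$ to coefficients at $a-1$, so one must check that the shift $a \to a-1$ together with the $q^b$ prefactor reproduces precisely the upper limit $a-b$ (which is invariant under $a \to a-1$, $b \to b-1$) and the correct monotone nesting. Ensuring that the boundary terms $m_k = 1$ and the top limit $m_1 = a-b$ are neither double-counted nor dropped is where the argument is most delicate; the partition-theoretic interpretation (step iii--iv) sidesteps this entirely and is therefore the cleaner route, so I would lead with it and relegate the direct inductive verification to a remark.
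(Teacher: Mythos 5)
Your proposal is correct, but your preferred route is genuinely different from the paper's. You lead with a purely combinatorial argument: read the $k$-fold nested sum as the generating function for partitions of the exponent into exactly $k$ parts, each at most $a-b$ (the nesting $a-b\geq m_1\geq\cdots\geq m_k\geq 1$ enforcing weak decrease), sum over $k=0,\ldots,b$ to get $\sum_l p(a-b,b,l)\,q^l$, and then invoke the Young-diagram interpretation of the Gaussian polynomial (\textsc{Interpretation 2}) to conclude. That is a valid and clean proof, since the paper records \textsc{Interpretation 2} as known background with a citation, and it is not circular with respect to Theorem \ref{thm:q_binomial_ps_explicit}, whose new content is the explicit Heaviside formula for the coefficients rather than the partition interpretation itself. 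The paper instead does what you relegate to a remark: it starts from the $q$-Pascal rule in the form ${a \brack b}_q = {a-1 \brack b}_q + q^{a-b}{a-1 \brack b-1}_q$, iterates it $a-b$ times on the first term to obtain ${a \brack b}_q = 1 + \sum_{m_1=1}^{a-b} q^{m_1}{b-1+m_1 \brack b-1}_q$, and then recurses — exactly the mechanism you describe, with the invariance of $a-b$ under the simultaneous shift producing the upper limit $m_1$ for the next layer, and termination after $b$ steps. What the paper's route buys is self-containedness: the nested-sum expansion is derived from algebraic properties of ${a \brack b}_q$ alone, so that the partition interpretation of the $\phi^{a,b}_{\nu,k}$ and the recurrences \eqref{eq:recurrence_p} emerge as consequences rather than inputs, which is the point the authors make at the end of the appendix. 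What your route buys is brevity and transparency of the bookkeeping you rightly flag as the delicate part of the induction. Either argument suffices for the statement as posed.
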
 				

\begin{proof}
We begin with the $q-$Pascal rule\cite{book:Kac-2001}:
				\begin{equation}
				\label{eq:Pascal_rule_1}
				{ \ a \ \brack \ b \ }_q =  { \ a - 1 \ \brack \ b \ }_q + q^{a-b} { \ a-1 \ \brack \ b-1 \ }_q
				\end{equation}
and the fact that for any nonnegative integer $a$,
				\begin{equation}
				{ \ a \ \brack \ 0 \ }_q =  { \ a \ \brack \ a \ }_q = 1 \ ,
				\end{equation}
(from which follows the case $a=b$ of the theorem). 
\par For $a> b$, we repeat the above $q-$Pascal rule on the first term of Eq. \eqref{eq:Pascal_rule_1} $c$ times, where $1 \leq c \leq a-b$, and obtain
				\begin{equation}
				\label{eq:Pascal_rule_repeated_c}
				{ \ a \ \brack \ b \ }_q =  { \ a - c \  \brack \ b \ }_q + \sum^{c-1}_{k=0}q^{a-b-k} { \ a-1-k \ \brack \ b-1 \ }_q \ .
				\end{equation}
If we now set $c=a-b$, Eq. \eqref{eq:Pascal_rule_repeated_c} becomes
				\begin{equation}
				\label{eq:Pascal_rule_repeated_d}
				{ \ a \ \brack \ b \ }_q = 1 + \sum^{a-b-1}_{k=0}q^{a-b-k} { \ a-k-1 \ \brack \ a-b-k \ }_q \ .
				\end{equation}
which may also be written as,
				\begin{equation}
				\label{eq:Pascal_rule_repeated_e}
				{ \ a \ \brack \ b \ }_q = 1 + \sum^{a-b}_{m_1=1}q^{m_1} { \ b-1+m_1 \  \brack \ b-1 \ }_q \ .
				\end{equation}
Expanding ${ \ b-1+m_1 \brack \ b-1}_q$ according to Eq. \eqref{eq:Pascal_rule_repeated_e}  and inserting it back into the said equation, we obtain
				\begin{equation}
				{ \ a \ \brack \ b \ }_q = 1 + \sum^{a-b}_{m_1=1}q^{m_1}  + \sum^{a-b}_{m_1=1}\sum^{m_1}_{m_2=1}q^{m_1+m_2} { \ b-2+m_2 \ \brack \ b-2 \ }_q \ .
				\end{equation}
By induction, one can prove that the following equation holds after $n$ repetitions:
				\begin{equation}
				{ \ a \ \brack \ b \ }_q  = 1 + \sum^{a-b}_{m_1=1}q^{m_1}  + \sum^{a-b}_{m_1=1}\sum^{m_1}_{m_2=1}q^{m_1+m_2}  + \cdots + \sum^{a-b}_{m_1=1}\sum^{m_1}_{m_2=1}\cdots \sum^{m_{n-1}}_{m_{n}=1} q^{m_1+m_2+m_3+\cdots + m_{n}}{ \ b-n+m_{n} \  \brack \ b-n \ }_q \ .
				\end{equation}
The repetition certainly ends when ${ \ b-n+m_{n} \ \brack \ b-n \ }_q = 1$ for all $m_n$. This obviously happens when $n=b$, which proves the theorem.
\end{proof}

Furthermore, the following corollary follows from Theorem \ref{thm:q_binomial_conv_sum_expansion} and the property of $q-$binomial coefficients: 
\begin{corollary}
For the nonnegative integers $a,b$,
				\begin{equation}
				\boxed{
				{ \ a \ \brack \ b \ }_q  = 1 + \sum^{b}_{m_1=1}q^{m_1}  + \sum^{b}_{m_1=1}\sum^{m_1}_{m_2=1}q^{m_1+m_2}  + \cdots + \sum^{b}_{m_1=1}\sum^{m_1}_{m_2=1}\cdots \sum^{m_{a-b-1}}_{m_{a-b}=1} q^{m_1+m_2+m_3+\cdots + m_{a-b}} }
				\end{equation}
where $a \geq b$.
\end{corollary}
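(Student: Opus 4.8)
The plan is to derive this directly from Theorem \ref{thm:q_binomial_conv_sum_expansion} together with the symmetry relation ${ \ a \ \brack \ b \ }_q = { \ a \ \brack \ a-b \ }_q$, which holds for all nonnegative integers with $a \geq b$ and was recorded at the opening of this Appendix. The essential observation is that Theorem \ref{thm:q_binomial_conv_sum_expansion} expresses ${ \ a \ \brack \ b \ }_q$ as a collection of $(b+1)$ nested-sum blocks whose outer summation bound is $a-b$ and whose deepest block has nesting depth $b$; applying that very theorem instead to the reflected coefficient ${ \ a \ \brack \ a-b \ }_q$ simply interchanges the roles of $b$ and $a-b$, producing exactly the claimed expression with outer bound $b$ and maximal depth $a-b$.

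Concretely, I would first invoke the symmetry to replace the target ${ \ a \ \brack \ b \ }_q$ by ${ \ a \ \brack \ a-b \ }_q$. Then I would apply Theorem \ref{thm:q_binomial_conv_sum_expansion} to ${ \ a \ \brack \ a-b \ }_q$, which is legitimate since the hypothesis of the theorem — that the lower index not exceed the upper — here reads $a \geq a-b$, i.e. $b \geq 0$, and so holds automatically. Under the formal substitution $b \mapsto a-b$ in the statement of the theorem, the outer summation bound $a-b$ appearing in each block becomes $a-(a-b)=b$, while the nesting depth of the final block, which equals the lower index, becomes $a-b$; renaming the running indices $m_1, m_2, \ldots$ accordingly then reproduces verbatim the right-hand side of the corollary.

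Finally, I would check the degenerate cases to confirm the formula is uniform. When $a=b$ the substituted lower index is $0$, so the theorem contributes only the constant term $1$, in agreement with ${ \ a \ \brack \ a \ }_q = 1$; when $b=0$ every block has empty outer bound $b=0$ and therefore vanishes, again leaving $1={ \ a \ \brack \ 0 \ }_q$. There is essentially no analytical obstacle here, as the whole content is the reflection $b \leftrightarrow a-b$ applied to an already-established identity. The only point demanding a moment of care is the bookkeeping: verifying that the substitution correctly sends the $b$-fold nesting of Theorem \ref{thm:q_binomial_conv_sum_expansion} to the $(a-b)$-fold nesting of the corollary, and that all summation bounds transform consistently under $b \mapsto a-b$.
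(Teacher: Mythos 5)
Your proposal is correct and is precisely the argument the paper intends: the corollary is stated to follow from Theorem \ref{thm:q_binomial_conv_sum_expansion} together with ``the property of $q$-binomial coefficients,'' namely the symmetry ${\ a\ \brack\ b\ }_q = {\ a\ \brack\ a-b\ }_q$, which is exactly the reflection $b \leftrightarrow a-b$ you apply. Your bookkeeping of the outer bound becoming $b$ and the nesting depth becoming $a-b$, along with the degenerate-case checks, is accurate and complete.
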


We can finally state the following theorem -- closely related to Theorem \ref{thm:q_binomial_conv_sum_expansion} -- which gives an analytical expression for the coefficients of the power series of Gauss polynomials:
\begin{theorem}\label{thm:q_binomial_ps_explicit}
Given the nonnegative integers $a \geq b$, then 
				\begin{equation}
				{ \ a \ \brack \ b \ }_q = \sum^{b(a-b)}_{k=0} \ p(a-b,b,k) \ q^k
				\end{equation}
where,
				\begin{equation}
				\label{eq:def_c_a_b_k}
				\boxed{
				 p(a-b,b,k) = \sum^b_{\nu=0} \phi^{a,b}_{\nu,k} } \ ,
				\end{equation}
with
				\begin{subequations}
				\label{eq:phi_a_bs}
				\begin{align}
				 \phi^{a,b}_{0,k} & = \delta_{0,k} \label{eq:phi_a_bs_0} \\
				 \phi^{a,b}_{1,k} & = H(a-b-k) \ H(k-1) \label{eq:phi_a_bs_1} \\
				 \phi^{a,b}_{2,k} & = \sum^{a-b}_{m_1=1}H(2m_1-k)\ H(k-m_1-1) \label{eq:phi_a_bs_2} \\
				 \phi^{a,b}_{3,k} &  =\sum^{a-b}_{m_1=1}\sum^{m_1}_{m_2=1}H(2m_2+m_1-k)\ H(k-m_1-m_2-1) \label{eq:phi_a_bs_3}\\
				 				  & \vdots \nonumber \\
				 \phi^{a,b}_{\nu,k} &  =\sum^{a-b}_{m_1=1} \cdots \sum^{m_{\nu-2}}_{m_{\nu-1}=1}H(m_1+\ldots + m_{\nu-2}+2m_{\nu-1}-k)\ H(k-1-m_1-\ldots -m_{\nu-1}) \label{eq:phi_a_bs_k}
				\end{align}
				\end{subequations}
and where $H(x)$ is the Heaviside step-function defined in Eq. \eqref{eq:Heaviside}.
\end{theorem}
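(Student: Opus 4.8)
The plan is to read the coefficients off directly from the nested-sum expansion already established in Theorem \ref{thm:q_binomial_conv_sum_expansion}. Writing Eq. \eqref{eq:q_binomial_expansion_1} as ${ \ a \ \brack \ b \ }_q = \sum_{\nu=0}^{b} T_\nu(q)$, where $T_\nu(q)$ denotes the $\nu$-fold nested sum
\[
T_\nu(q) := \sum_{m_1=1}^{a-b}\sum_{m_2=1}^{m_1}\cdots\sum_{m_\nu=1}^{m_{\nu-1}} q^{m_1+m_2+\cdots+m_\nu}, \qquad T_0(q):=1,
\]
I would define $\phi^{a,b}_{\nu,k}$ to be the coefficient of $q^k$ in $T_\nu(q)$. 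Since the summation ranges enforce $a-b \geq m_1 \geq m_2 \geq \cdots \geq m_\nu \geq 1$, this coefficient is manifestly the number of partitions of $k$ into \emph{exactly} $\nu$ parts, each part at most $a-b$. Comparing with Interpretation 2 — according to which the coefficient of $q^k$ in ${ \ a \ \brack \ b \ }_q$ is $p(a-b,b,k)$, the number of partitions of $k$ into at most $b$ parts each at most $a-b$ — and noting that a partition into at most $b$ parts is classified uniquely by its exact number of parts $\nu\in\{0,\dots,b\}$, summing $T_\nu$ over $\nu$ immediately yields Eq. \eqref{eq:def_c_a_b_k}, namely $p(a-b,b,k)=\sum_{\nu=0}^{b}\phi^{a,b}_{\nu,k}$.

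It then remains to recast each $\phi^{a,b}_{\nu,k}$ in the closed Heaviside form of Eqs. \eqref{eq:phi_a_bs}. The key manoeuvre is to carry out only the \emph{innermost} summation over $m_\nu$ explicitly, leaving the outer $\nu-1$ sums intact. Introducing the Iverson bracket for the constraint $m_1+\cdots+m_\nu=k$, the inner sum collapses to
\[
\sum_{m_\nu=1}^{m_{\nu-1}} \big[\, m_\nu = k - (m_1+\cdots+m_{\nu-1}) \,\big] = \big[\, 1 \leq k - (m_1+\cdots+m_{\nu-1}) \leq m_{\nu-1} \,\big].
\]
The lower bound $k-(m_1+\cdots+m_{\nu-1})\geq 1$ is exactly $H(k-1-m_1-\cdots-m_{\nu-1})=1$, while the upper bound $k-(m_1+\cdots+m_{\nu-1})\leq m_{\nu-1}$ rearranges to $m_1+\cdots+m_{\nu-2}+2m_{\nu-1}-k\geq 0$, i.e. $H(m_1+\cdots+m_{\nu-2}+2m_{\nu-1}-k)=1$. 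Taking the product of the two step functions and restoring the surviving outer sums reproduces Eq. \eqref{eq:phi_a_bs_k} verbatim. The low-order cases are then checked by inspection: $T_0=1$ gives $\phi^{a,b}_{0,k}=\delta_{0,k}$ (Eq. \eqref{eq:phi_a_bs_0}); applying the same inner-sum reduction to the single sum $T_1=\sum_{m_1=1}^{a-b}q^{m_1}$ gives the lower factor $H(k-1)$ and the upper factor $H(a-b-k)$ — here the fixed upper limit $a-b$ takes the place of the variable $m_{\nu-1}$ — yielding Eq. \eqref{eq:phi_a_bs_1}; and one reduction step applied to $T_2$ recovers Eq. \eqref{eq:phi_a_bs_2}.

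The only genuine subtlety — and the step I would treat most carefully — is the bookkeeping at the two extremes of the range of $m_\nu$: one must verify that the pair of Heaviside factors, under the convention $H(0)=1$ fixed in Eq. \eqref{eq:Heaviside}, correctly counts the boundary configurations $m_\nu=1$ and $m_\nu=m_{\nu-1}$ with no omission or double-counting, and that the general formula for $\nu\geq 2$, whose upper step function carries the characteristic $2m_{\nu-1}$, degenerates consistently to the separately-stated cases $\nu=0,1$ where that variable is absent or replaced by the constant $a-b$. Everything else is a routine term-by-term extraction, so no induction beyond that already supplied by Theorem \ref{thm:q_binomial_conv_sum_expansion} is required.
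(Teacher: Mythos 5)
Your proposal is correct and follows essentially the same route as the paper: both read the coefficients of $q^k$ off the nested-sum expansion of Theorem \ref{thm:q_binomial_conv_sum_expansion}, identify $\phi^{a,b}_{\nu,k}$ as the number of partitions of $k$ into exactly $\nu$ parts (so that summing over $\nu$ gives $p(a-b,b,k)$), and obtain the two Heaviside factors by resolving the innermost summation variable subject to $1\leq m_\nu\leq m_{\nu-1}$ — the paper phrases this last step as the one-part restricted partition $p(m_{\nu-1},1,k-m_1-\cdots-m_{\nu-1})=H(m_{\nu-1}-k+m_1+\cdots+m_{\nu-1})H(k-1-m_1-\cdots-m_{\nu-1})$, which is exactly your Iverson-bracket computation.
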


\begin{proof}
Eq. \eqref{eq:q_binomial_expansion_1} may be rewritten as
				\begin{equation}
				\label{eq:q_binomial_expansion_2}
				\begin{split}
				{ \ a \ \brack \ b \ }_q & = 1 + \sum^{a-b}_{\mu=1} \phi^{a,b}_{1,\mu} \ q^{\mu}  + \sum^{2(a-b)}_{\mu=2} \phi^{a,b}_{2,\mu} \ q^{\mu}  + \cdots + \sum^{b(a-b)}_{\mu=b} \phi^{a,b}_{b,\mu} \ q^{\mu} \\
				 & =  \sum^{b(a-b)}_{k=1} \sum^b_{\nu=0}\phi^{a,b}_{\nu,k} \ q^{k} \ ,
				\end{split}
				\end{equation}
The $\{\phi^{a,b}_{\nu,\mu}\}$ are peculiar partition functions of the nonnegative integer $k$: In particular, $\phi^{a,b}_{\nu,\mu}$ counts the number of distinct ways of obtaining $k$ from the sum of exactly $\nu$ integers: $m_1 + m_2 + \ldots + m_\nu = k$, where each $m_i \ (\in \mathbb{N})$ is at most $\nu(a-b)$ and $m_1 \geq m_2 \geq \ldots \geq m_\nu$.  Eq. \eqref{eq:def_c_a_b_k} readily follows from the last equation on the RHS of Eq. \eqref{eq:q_binomial_expansion_2}. Note that the set of integers with nonzero partitions according to $\phi^{a,b}_{\nu,k}$ is bounded from below by $\nu$ and above by $\nu(a-b)$.
\par Regarding the explicit form of the various $\phi^{a,b}_{\nu,k}$, it is clear that Eq. \eqref{eq:phi_a_bs_0} holds. When it comes to $\phi^{a,b}_{1,k}$, we are talking about a one part partition of $k$, with the part not greater than $(a-b)$, \emph{i.e.} 
				\begin{equation}
				\label{eq:phi_1_partion_equality}
				\phi^{a,b}_{1,k}=p(a-b,1 ,k)
				\end{equation}
where $p(n,1,k)$ is the partition function of $k$ into a single part not greater than $n$ (and in which zero is not an admissible part). It is obvious that 
				\begin{equation}
				\label{eq:partion_one_Heaviside_identity}
				p(n,1,k)=H(n-k) \ H(k-1) \ , \qquad \ \forall \ n,k \in \mathbb{N}
				\end{equation}
where $H(x)$ is the Heaviside step-function defined as in Eq. \eqref{eq:Heaviside}. 
From Eqs. \eqref{eq:phi_1_partion_equality} and \eqref{eq:partion_one_Heaviside_identity} follows \eqref{eq:phi_a_bs_1}. 
\par For $\phi^{a,b}_{2,k}$ we are considering the distinct partitions of $k$ into two parts, \emph{i.e.} $m_1+m_2 =k$, with $m_1 \geq m_2$. We note from Eq. \eqref{eq:q_binomial_expansion_1} that whenever we fix $m_1$, the maximum value $m_2$ can take is $m_1$. This implies that upon fixing $m_1$, the partition of $k$ is simply $p(m_1,1, k-m_1)$, which upon summing over all the allowed values of $m_1$ and employing the identity in Eq. \eqref{eq:partion_one_Heaviside_identity} yields Eq. \eqref{eq:phi_a_bs_2}. For any generic $\phi^{a,b}_{\nu,k}$, one imagines fixing the $m_1,\ldots, m_{\nu-1}$ which leaves out only $m_\nu$ as the only degree of freedom; but since the sum $m_1 + \ldots + m_\nu=k$ must hold and the maximum value of $m_\nu$ is $m_{\nu-1}$, the partition $\phi^{a,b}_{\nu,k}$ reduces to $p(m_{\nu-1},1,k-m_1-\ldots - m_{\nu-1})$, which upon applying Eq. \eqref{eq:partion_one_Heaviside_identity} and summing over $m_1 , \ldots, m_{\nu_1}$ gives Eq. \eqref{eq:phi_a_bs_k}.
\end{proof}
\par The partition functions $\left\lbrace \phi^{a,b}_{\nu,k} \right\rbrace$ in Eq. \eqref{eq:phi_a_bs} can be written in a more explicit form by performing the summation. For $\phi^{a,b}_{1,k}$ it is immediate from Eq. \eqref{eq:phi_a_bs_1} that
					\begin{equation}
					\phi^{a,b}_{1,k} = 
					\begin{cases}
					1 \ , & \mbox{if } 1\leq k \leq a-b  \\
					0 \ , & \mbox{otherwise}
					\end{cases}
					\end{equation}
For $\nu \geq 2$, $\phi^{a,b}_{\nu,k}$ is not that immediate, though the process involves very elementary mathematics. To illustrate this point, we write $\phi^{a,b}_{2,k}$ in a more explicit form without the summation over $m_1$. In order to do so, we make use of the identity
					\begin{equation}
					\label{eq:Heaviside_Sgn_relation}
					H(x) = \frac{1}{2} \left[ 1 + \Sgn(x) \right]
					\end{equation}					 
where $\Sgn(x)$ is the modified sign function, defined as
					\begin{equation}
					\label{eq:def_Sgn_function}
					\Sgn(x) := \begin{cases}
					-1 \ , & \mbox{if } x < 0 \\
					+1 \ , & \mbox{if } x \geq 0
					\end{cases} \ .
					\end{equation}
Now, in light of Eq. \eqref{eq:Heaviside_Sgn_relation}, we may rewrite Eq. \eqref{eq:phi_a_bs_2} as
					\begin{equation}
					\label{eq:phi_a_b_2_expanded_i}
					\phi^{a,b}_{2,k}= \frac{1}{4}\sum^{a-b}_{m_1=1} \left[1+\Sgn(k-m_1-1) + \Sgn(2m_1-k) +  \Sgn(k-m_1-1)\cdot \Sgn(2m_1-k) \right] \ .
					\end{equation}
To proceed, we must consider separately the events: 1) $a-b >k $, 2) $a-b < k$ and 3) $a-b=k$. Moreover, for each case, it is convenient to rewrite Eq. \eqref{eq:phi_a_b_2_expanded_i} in such a way that each summand appearing on the RHS has a unique sign.
\par If $a-b > k$, then Eq. \eqref{eq:phi_a_b_2_expanded_i} may be rewritten as
					\begin{multline}
					\phi^{a,b}_{2,k}= \frac{1}{4} \left[(a-b)+ \sum^{k-1}_{m_1=1}\Sgn(k-m_1-1) + \sum^{a-b}_{m_1=k}\Sgn(k-m_1-1) + \sum^{\lfloor \frac{k-1}{2} \rfloor}_{m_1=1} \Sgn(2m_1-k) \right. \\ 
					 + \sum^{a-b}_{m_1=\lfloor \frac{k-1}{2} \rfloor +1} \Sgn(2m_1-k) + \sum^{\lfloor \frac{k-1}{2} \rfloor}_{m_1=1} \Sgn(2m_1-k) \cdot \Sgn(k-m_1-1) \\
					\left. + \sum^{k-1}_{m_1=\lfloor \frac{k-1}{2} \rfloor +1}\Sgn(2m_1-k) \cdot \Sgn(k-m_1-1) + \sum^{a-b}_{m_1=k}\Sgn(2m_1-k) \cdot \Sgn(k-m_1-1)\right] 	\ ,		
					\end{multline}
which upon summation yields
					\begin{equation}
					\label{eq:phi_2_k_a}
					\phi^{a,b}_{2,k} = k-1 - \left\lfloor \frac{k-1}{2} \right\rfloor \ , \qquad \mbox{if } (a-b) >k \ .
					\end{equation}
The result proves that when $(a-b) >k$, the partition $\phi^{a,b}_{2,k}$ is independent of $a$ and $b$.
\par In the case whereby $a-b < k$, $\Sgn(k-m_1-1)$ is always positive. But for $\Sgn(2m_1-1)$ we must consider separately the cases: i) $a-b > \left\lfloor \frac{k-1}{2}\right\rfloor$, ii) $a-b < \left\lfloor \frac{k-1}{2}\right\rfloor$ and iii) $a-b = \left\lfloor \frac{k-1}{2}\right\rfloor$. Following the same lines of reasoning as in the case $a-b >k$ discussed above, one gets -- after some algebra -- 
					\begin{equation}
					\label{eq:phi_2_k_b}
					\phi^{a,b}_{2,k} = 
					\begin{cases}
					0 \ , & \mbox{if } a-b \leq \left\lfloor \frac{k-1}{2} \right\rfloor \\
					a-b - \left\lfloor \frac{k-1}{2} \right\rfloor \ , & \mbox{if }\left\lfloor \frac{k-1}{2} \right\rfloor < a-b < k 
					\end{cases} \ .
					\end{equation}	
\par For $a-b=k$, we derive that
					\begin{equation}
					\label{eq:phi_2_k_c}
					\phi^{a,b}_{2,k} = a-b-1 - \left\lfloor \frac{a-b-1}{2} \right\rfloor \ .
					\end{equation}	
Putting together Eqs. \eqref{eq:phi_2_k_a}-\eqref{eq:phi_2_k_c}, we have that				
					\begin{equation}
					\label{eq:phi_2_k_final}
					\phi^{a,b}_{2,k} = 
					\begin{cases}
					k-1 - \left\lfloor \frac{k-1}{2} \right\rfloor \ , & \mbox{if } k\leq (a-b)  \\
					a-b - \left\lfloor \frac{k-1}{2} \right\rfloor \ , & \mbox{if }\left\lfloor \frac{k-1}{2} \right\rfloor < (a-b) < k \\
0 \ , & \mbox{if } \left\lfloor \frac{k-1}{2} \right\rfloor \geq (a-b) 
					\end{cases} \ .					
					\end{equation}	

\par We remark that,
					\begin{equation}
					\phi^{a,b}_{\nu,k} = \phi^{a',b'}_{\nu',k} \ \qquad \ (a\geq b \ \wedge a' \geq b')
					\end{equation}
if $a-b=a'-b'$, $\nu = \nu'$ and $\nu \leq \min(b,b')\cdot (a-b)$. That is, the partitions $\phi^{a,b}_{\nu,k}$ are translationally invariant with respect to fixed $\nu$ as long as $\nu \leq \min(b,b')\cdot (a-b)$. 			
\par A number of identities and recurrence relations involving the restricted partitions $p(n,m,k)$ (or even $\phi^{a,b}_{\nu,k}$) can be derived from the above relations. Among these are
					\begin{subequations}
					\label{eq:recurrence_p}
					\begin{align}
					p(a-b,b,k) & = p(b,a-b,k) \label{eq:recurrence_p_1} \\
					p(a-b,b,k) & = p(a-b,b-1,k) + p(a-b-1,b,k-b) \label{eq:recurrence_p_2} \\
					p(a-b,b,k) & = p(a-b,b-1,k+b-a) + p(a-b-1,b,k) \label{eq:recurrence_p_3} \\
					p(a-b,b,k) & = \sum^{a-b}_{\substack{m=1 \\ (m \leq k \leq bm)}} p(m,b-1,k-m) \label{eq:recurrence_p_4}
					\end{align}
					\end{subequations}	
The corresponding relations for $\phi^{a,b}_{\nu,k}$ may be derived from Eq. \eqref{eq:def_c_a_b_k}. For instance, combining the latter and Eq. \eqref{eq:recurrence_p_1} yields the identity
					\begin{equation}
					\sum^k_{\nu=0} \phi^{a,b}_{\nu,k}  = \sum^k_{\nu=0} \phi^{b,a-b}_{\nu,k} \ .
					\end{equation}	
Eqs. \eqref{eq:recurrence_p_1} and \eqref{eq:recurrence_p_2} are well-known identities (see for example \citep[Chap. 3]{book:Andrews-1976}), but it is worth noting that they have been derived here, without much effort, using essentially properties of Gaussian polynomials.  
\par Finally, we remark that in the limit $q \to 1$, Eq. \eqref{eq:q_binomial_expansion_1} turns out to be
					\begin{equation}
					\binom{a}{b}= \sum^b_{k=0} \binom{a-b-1+n}{n} \ ,
					\end{equation}
while from Eq. \eqref{eq:q_binomial_expansion_2} we have
					\begin{equation}
					\binom{a}{b}= \sum^{a-b}_{k=0} \binom{b-1+n}{n} \ ,
					\end{equation}
which are well-known binomial identities.

\bibliography{Gen_Clebsch_Gordan_biblio_2}
\end{document}